\numberwithin{equation}{section}
\theoremstyle{plain}
\newtheorem{theorem}{Theorem}[section] %[section]
\newtheorem{lemma}[theorem]{Lemma}
\newtheorem{definition}[theorem]{Definition}
\newtheorem{corollary}[theorem]{Corollary}
\newtheorem{remark}[theorem]{Remark}
\newtheorem{proposition}[theorem]{Proposition}
\theoremstyle{definition}
\newcommand{\bN}{\mathbb{N}}
\newcommand{\cF}{\mathcal{F}}
\newcommand{\cB}{\mathcal{B}}
\newcommand{\cA}{\mathcal{A}}
\newcommand{\cM}{\mathcal{M}}
\newcommand{\cU}{\mathcal{U}}
\newcommand{\bC}{\mathbb{C}}
\newcommand{\bR}{\mathbb{R}}
\newcommand{\cG}{\mathcal{G}}
\newcommand{\cR}{\mathcal{R}}
\newcommand{\id}{\text{id}}
\newcommand{\cI}{\mathcal{I}}
\newcommand{\cJ}{\mathcal{J}}
\title[Synchronous games with $*$-isomorphic game algebras]{Synchronous games with $*$-isomorphic game algebras}
\author{Samuel J. Harris}
\address{Texas A\&M University\\
Department of Mathematics \\
College Station, TX \\
77843-3368 USA}
\email{sharris@tamu.edu}
\begin{document}
\maketitle

\begin{abstract}
We establish several strong equivalences of synchronous non-local games, in the sense that the corresponding game algebras are $*$-isomorphic. We first show that the game algebra of any synchronous game on $n$ inputs and $k$ outputs is $*$-isomorphic to the game algebra of an associated bisynchronous game on $nk$ inputs and $nk$ outputs. As a result, we show that there are bisynchronous games with equal question and answer sets, whose optimal strategies only exist in the quantum commuting model, and not in the quantum approximate model. Moreover, we exhibit a bisynchronous game with $20$ questions and $20$ answers that has a non-zero game algebra, but no winning commuting strategy, resolving a problem of V.I. Paulsen and M. Rahaman. We also exhibit a $*$-isomorphism between any synchronous game algebra with $n$ questions and $k>3$ answers and a synchronous game algebra with $n(k-2)$ questions and $3$ answers.
\end{abstract}

\maketitle

\section{Introduction}

The theory of two-player non-local games has seen a significant amount of attention in the last number of years. These games, which feature two players that work cooperatively to win a single-round game, can be used as a way to check whether or not the players possess a certain form of quantum entanglement. The two players, usually named Alice and Bob, are each asked a question $x$ and $y$ respectively from a finite question set $I$ by a referee. The players respond with answers $a$ and $b$, respectively, from a finite answer set $O$. The rules of the game are encoded in a rule function $\lambda:O \times O \times I \times I \to \{0,1\}$. The referee declares that the players win the round if their $4$-tuple $(a,b,x,y)$ is allowed; that is, if $\lambda(a,b,x,y)=1$. The players lose if $\lambda(a,b,x,y)=0$. The players are not allowed to communicate once the game begins, but are allowed to agree on a strategy beforehand.

These non-local games directly relate to the sets of probability distributions that are possible in a finite-input, finite-output system, where Alice and Bob are in a separated system (in particular, Alice's measurement operators commute with Bob's measurement operators). Generally, the probability of the players obtaining outputs $a$ and $b$, respectively, given inputs (or experiments) $x$ and $y$, is given by the joint probability $p(a,b|x,y)$. In this way, in a setting where there are $n$ experiments, each with $k$ outcomes, the set of probability distributions form a convex subset of $\mathbb{R}^{n^2k^2}$. The set of all distributions obtained from classical random variables is the set of \textbf{local correlations}, denoted $C_{loc}(n,k)$. The set of all such joint distributions obtained by possessing an entangled state in a finite-dimensional resource Hilbert space is the set of \textbf{quantum correlations}, denoted by $C_q(n,k)$. The set of \textbf{approximately finite-dimensional quantum correlations} is defined as $C_{qa}(n,k)=\overline{C_q(n,k)}$. Lastly, the set of \textbf{quantum commuting correlations} $C_{qc}(n,k)$ is the set of all probability distributions obtained when the players possess a shared entangled state, but possibly over an infinite-dimensional resource Hilbert space. Naturally, elements of the sets $C_t(n,k)$, $t \in \{loc,q,qa,qc\}$ can be thought of as strategies for a non-local game with $n$ questions and $k$ answers. We will call an element of $C_t(n,k)$ a \textbf{winning $t$-strategy} for a game if the players win each round with probability $1$; that is, if $p(a,b|x,y)=0$ whenever $\lambda(a,b,x,y)=0$.

In the last few years, several significant open problems dating back to B.S. Tsirelson \cite{Ts93} have been solved regarding these correlation sets. The strong Tsirelson problem asks whether $C_q(n,k)=C_{qc}(n,k)$ for all $n,k \geq 2$. This was resolved negatively in exciting work of W. Slofstra \cite{Slo19}, who showed that $C_q(n,k) \neq C_{qa}(n,k)$ for large values of $n$ and $k$. The weak Tsirelson problem asks whether $C_{qa}(n,k)=C_{qc}(n,k)$ for all $n,k \geq 2$. This remained open, until very recently a counterexample was exhibited in monumental work of Z. Ji, A. Natarajan, T. Vidick, J. Wright and H. Yuen \cite{JNVWY20}. The weak Tsirelson problem has deep connections in the theory of operator algebras. In particular, an equivalent form is Connes' embedding problem \cite{Co76}, which asks whether every weakly separable finite von Neumann algebra $\cM$ can be approximately embedded into the hyperfinite $II_1$ factor $\mathcal{R}$ in a trace-preserving manner \cite{J+,Fr11,Oz13}.

A special class of non-local games, called synchronous games, has been studied heavily since the introduction of the idea in \cite{PSSTW16}. A game is \textbf{synchronous} if, whenever the players receive identical inputs, they must respond with identical outputs. This innocent condition forces players that win the game with probability $1$ to be in possession of a (quantum) shared function, and the resulting probability distributions relate directly to traces on the $C^*$-algebra generated by Alice's (respectively, Bob's) operators. Depending on the model $t \in \{loc,q,qa\}$, the $C^*$-algebra can be arranged to be abelian, finite-dimensional, or a free ultrapower of the hyperfinite $II_1$ factor $\cR$, respectively. Key to the study of synchronous games is the game algebra $\cA(\cG)$ of a synchronous game, which encodes all winning strategies into a single, universal unital $*$-algebra. These algebras sometimes have strange properties, as outlined in \cite{HMPS19}: there exist synchronous games that have no winning strategies in the $qc$ model, but whose game algebra is non-trivial as a unital $*$-algebra.

A related class of games that has received more attention recently is the class of bisynchronous games, which are synchronous games with the added property that players who receive distinct questions \textit{cannot} respond with identical answers. Regarding these games, two problems were posed by V.I. Paulsen and M. Rahaman in \cite{PR21}:

\begin{itemize}
\item[(1)] If $\cG=(I,O,\lambda)$ is a bisynchronous game with $|I|=|O|$ and $\cA(\cG) \neq \{0\}$, then does $\cG$ have a winning $qc$-strategy?
\item[(2)]
Does there exist a bisynchronous game $\cG=(I,O,\lambda)$ with $|I|=|O|$ such that $\cG$ has a winning $qc$-strategy, but no winning $qa$-strategy? (More generally, is the set of bisynchronous $qa$-correlations in $m$ inputs and $m$ outputs ever distinct from the set of bisynchronous $qc$-correlations in $m$ inputs and $m$ outputs?)
\end{itemize}
In contrast, if one allows for $|I| \neq |O|$, then the answers to both (1) and (2) are known. Indeed, this is since every synchronous game $\cG=(I,O,\lambda)$ with $|I|=n$ and $|O|=k$ is equivalent, in a sense, to a bisynchronous game with $n$ inputs and $nk$ outputs \cite{PR21}, and there are examples of a synchronous game violating (1) (see \cite{HMPS19}) and a synchronous game satisfying (2) (see \cite{JNVWY20}), both without the requirement that $|I|=|O|$. On the other hand, a key example of a bisynchronous game with $|I|=|O|$ is the graph isomorphism game $\text{Iso}(G,H)$ between two finite, simple, undirected graphs $G$ and $H$ with the same number of vertices. For that game, the answer to question (1) is ``yes" \cite{BCEHPSW20}, and the answer to question (2) is still open.

In this paper, we resolve both problems. In particular, we exhibit a bisynchronous game with $20$ questions and $20$ answers that has a non-zero game algebra, but no winning $qc$-strategies, which yields a negative answer to problem (1). We also prove that there is a bisynchronous game with equal question and answer sets that has a winning $qc$-strategy, but no winning $qa$-strategy, which yields a positive answer to problem (2). Our approach uses certain $*$-isomorphisms between synchronous game algebras. Any two synchronous games that have $*$-isomorphic game algebras are very closely related in terms of winning strategies in the different models. In particular, with a sufficiently nice isomorphism of game algebras, one yields affine homeomorphisms of the sets of winning strategies in the different models (see Theorem \ref{theorem: isomorphism gives affine homeomorphisms}). In this way, we prove that, in the models $t \in \{loc,q,qa,qc\}$, everything about winning strategies for synchronous games can be reduced to the bisynchronous setting with equal question and answer sets. We also prove a similar reduction from synchronous games with $k>3$ outputs to those with $3$ outputs, extending a result of T. Fritz \cite{Fr20}.

The paper is organized as follows. In Section \S2, we recall some preliminaries regarding synchronous correlations and synchronous games. We also prove that a certain type of $*$-isomorphism $\cA(\cG_1) \simeq \cA(\cG_2)$ between synchronous game algebras yields affine homeomorphisms between the sets of winning $t$-strategies for the two games, for $t \in \{loc,q,qa,qc\}$. In Section \S3, we prove that every synchronous game $\cG$ on $n$ inputs and $k$ outputs has game algebra isomorphic to the game algebra of a bisynchronous game $\widetilde{\cG}$ with $nk$ inputs and $nk$ outputs (Theorem \ref{theorem: bisynchronous equivalence}). In Section \S4, we obtain a similar isomorphism between synchronous games with $n$ inputs and $k>3$ outputs and synchronous games with $n(k-2)$ inputs and $3$ outputs (Theorem \ref{theorem: equivalence to 3 output game}), yielding an alternate proof that synchronous games can be reduced, in a certain sense, to the three output setting--this fact was originally established in \cite{Fr20}. One corollary of this fact is that there is a synchronous game on $10$ inputs and $3$ outputs that has a non-zero game algebra, but no winning $qc$-strategy.

\section{Preliminaries}

In this section, we collect some facts about synchronous non-local games that we will need, along with a condition on isomorphic game algebras that yields affinely homeomorphic sets of winning strategies in the models $\{loc,q,qa,qc\}$.

A \textbf{non-local game} is a triple $\cG=(I,O,\lambda)$, where $I$ and $O$ are finite sets and $\lambda:O \times O \times I \times I \to \{0,1\}$ is a function called the \textbf{rule function}. We call the game \textbf{synchronous} if $\lambda(a,b,x,x)=0$ whenever $a \neq b$ in $O$. We typically write $n$ for the size of $I$ and $k$ for the size of $O$.

These games are played by two players, who aim to cooperatively win a round of the game by providing correct answers to a referee. Essentially, the referee asks question $x$ to the first player, Alice, and question $y$ to the second player, Bob. Alice and Bob must respond with answers $a$ and $b$, respectively. The players win the round if $\lambda(a,b,x,y)=1$, and they lose if $\lambda(a,b,x,y)=0$. While the players can agree on a strategy before the game begins, they are not allowed to communicate once it starts.

There are several natural sets of probability distributions for Alice and Bob's answers, given the questions they receive. In a classical (or local) model without the use of entanglement, these are the \textbf{local correlations}; in a finite-dimensional tensor product model for entanglement, the strategies are the \textbf{quantum correlations}; in an approximate finite-dimensional model they are the \textbf{quantum approximate correlations}; and in the commuting operator framework they are the \textbf{quantum commuting correlations}.

Thanks to the work of \cite{PSSTW16} and \cite{KPS18}, winning strategies for synchronous games are very nicely described in terms of traces the unital $C^*$-algebra generated by Alice's projections (respectively, Bob's projections); this is the approach that we will take here. One can view this link as a correspondence between synchronous correlations and the players having a quantum version of a shared function.

The set $C_{qc}^s(n,k)$ of all synchronous quantum commuting correlations is the set of all distributions of the form
\[ p(a,b|x,y)=\tau(E_{a,x}E_{b,y}),\]
where $\tau$ is a tracial state on a unital $C^*$-algebra $\cA$ (i.e., $\tau$ is a positive linear functional with $\tau(1)=1$ and $\tau(XY)=\tau(YX)$ for all $X,Y \in \cA$), and $\{E_{a,x}\}_{a=1}^k$ is a projection-valued measure (PVM) in $\cA$ for each $1 \leq x \leq n$ \cite{PSSTW16}. A standard trick involving the GNS representation of the trace allows one to assume without loss of generality that $\tau$ is faithful; that is, $\tau(A^*A)>0$ for any $A \in \cA \setminus \{0\}$ \cite{PSSTW16}.

The description of the set $C_{loc}^s(n,k)$ of synchronous local correlations is analogous to $C_{qc}^s(n,k)$, except that the projections $E_{a,x}$ are assumed to generate an abelian $C^*$-algebra. (In this case, the ``tracial" requirement on the state $\tau$ is redundant.) The set $C_q^s(n,k)$ of synchronous quantum correlations is defined like $C_{qc}^s(n,k)$, with the extra requirement that the algebra $\cA$ be finite-dimensional. Lastly, the set $C_{qa}^s(n,k)$ of all synchronous quantum approximate correlations, is the set of all such correlations where one can arrange for the projections to come from an ultrapower $\cR^{\cU}$ of the hyperfinite $II_1$-factor, using the canonical trace \cite{KPS18}. This set was also shown to be the closure of $C_q^s(n,k)$ in \cite{KPS18}.

It is well-known that
\[ C_{loc}^s(n,k) \subseteq C_q^s(n,k) \subseteq C_{qa}^s(n,k) \subseteq C_{qc}^s(n,k),\]
and all of these sets are convex in $\bR^{n^2k^2}$. Each set, except $C_q^s(n,k)$, is closed in general. Moreover, for large enough $n,k$, all these inclusions are strict \cite{AMRSSV19,KPS18,JNVWY20}.

It is helpful to consider the collection of all winning $t$-strategies for a synchronous game $\cG=(I,O,\lambda)$. where $t \in \{loc,q,qa,qc\}$. If $|I|=n$ and $|O|=k$, then we will define
\[ C_t(\lambda)=\{ (p(a,b|x,y)) \in C_t(n,k): p(a,b|x,y)=0 \text{ for all } (a,b,x,y) \in \lambda^{-1}(\{0\})\}.\]
As $\cG$ is synchronous, it is immediate that $C_t(\lambda) \subset C_t^s(n,k)$.

Key to our work is the universal unital $*$-algebra of a synchronous game $\cG=(I,O,\lambda)$. This algebra, called the \textbf{game algebra}, is the universal unital $*$-algebra $\cA(\cG)$ generated by elements $e_{a,x}$, for $x \in I$ and $a \in O$, satisfying the following conditions:
\begin{enumerate}
\item
$e_{a,x}^2=e_{a,x}=e_{a,x}^*$ for all $x \in I$ and $a \in O$;
\item
$\displaystyle \sum_{a \in O} e_{a,x}=1$ for every $x \in I$; and
\item
$e_{a,x}e_{b,y}=0$ whenever $\lambda(a,b,x,y)=0$.
\end{enumerate}
Alternatively, one forms this algebra by taking the free unital $*$-algebra $\cF(O,I)$ with generators $\widetilde{e}_{a,x}$, $a \in O, \, x \in I$, and taking a quotient of $\cF(O,I)$ by the two-sided $*$-closed ideal $\cI_{\lambda}$ generated by all terms of the form
\begin{align*}
\widetilde{e}_{a,x}^2-\widetilde{e}_{a,x}, \,\,\,\,\,&(a,x) \in O \times I \\
\widetilde{e}_{a,x}-\widetilde{e}_{a,x}^*, \, \,\,\,\,&(a,x) \in O \times I \\
1-\sum_{a \in O} \widetilde{e}_{a,x}, \, \,\,\,\,&x \in I, \\
\widetilde{e}_{a,x}\widetilde{e}_{b,y}, \,\,\,\,\, &(a,b,x,y) \in \lambda^{-1}(\{0\}).
\end{align*}
Depending on the game, this algebra may be trivial in the sense that we may have $\cI_{\lambda}=\cF(O,I)$. In this case, $\cA(\cG)=\{0\}$; i.e., $1=0$ in the game algebra. We say that the game algebra $\cA(\cG)$ \textbf{exists} if $1 \neq 0$ in $\cA(\cG)$.

Game algebras are important because they encode existence of winning strategies for synchronous games. Indeed, the following are true:
\begin{enumerate}
\item
$\cG$ has a winning $loc$-strategy $\iff$ $\cA(\cG)$ has a unital $*$-homomorphism into an abelian unital $C^*$-algebra;
\item
$\cG$ has a winning $q$-strategy $\iff$ $\cA(\cG)$ has a unital $*$-homomorphism into a finite-dimensional $C^*$-algebra;
\item
$\cG$ has a winning $qa$-strategy $\iff$ $\cA(\cG)$ has a unital $*$-homomorphism into $\cR^{\cU}$ for some free ultrapower $\cU$ over $\bN$;
\item
$\cG$ has a winning $qc$-strategy $\iff$ $\cA(\cG)$ has a unital $*$-homomorphism into a unital, tracial $C^*$-algebra $(\cA,\tau)$.
\end{enumerate}

(See \cite{HMPS19} for proofs of (1), (2) and (4). For a proof of (3), see \cite{KPS18}.)

Motivated by these facts, there is also the notion of a \textbf{winning $C^*$-strategy}, which is when $\cA(\cG)$ has a unital $*$-homomorphism into a unital $C^*$-algebra (possibly without a trace) \cite{HMPS19}. The game $\cG$ is said to have a \textbf{winning hereditary strategy} if $\cA(\cG)$ has a unital $*$-homomorphism into a hereditary unital $*$-algebra $\cB$ (that is, one where $\sum_{i=1}^m x_i^*x_i=0$ implies $x_i=0$ for any $x_i \in \cB$ and $m \in \bN$). Lastly, we say that $\cG$ has a \textbf{winning $A^*$-strategy} if $\cA(\cG)$ exists.

If $\cG$ is a synchronous non-local game and $t \in \{loc,q,qa,qc,C^*,hered,A^*\}$, then we let $E_t(\cG)$ be the statement that $\cG$ has a winning $t$-strategy. Then
\[ E_{loc}(\cG) \implies E_q(\cG) \implies E_{qa}(\cG) \implies E_{qc}(\cG) \implies E_{C^*}(\cG) \implies E_{hered}(\cG) \implies E_{A^*}(\cG).\]
Most of these implications cannot be reversed in general. The fact that $E_q(\cG)$ does not imply $E_{loc}(\cG)$ in general follows from many different synchronous games, such as the graph coloring game (see \cite{MR16}, for example), the magic square game (see \cite{Mer90}) or the graph isomorphism game \cite{AMRSSV19}. The first examples of synchronous games with winning $qa$-strategies but no winning $q$-strategies were found in \cite{KPS18}, based off of the ground-breaking separation of $C_q(n,k)$ and $C_{qa}(n,k)$ for large $n,k$ \cite{Slo19}. In the resolution of the weak Tsirelson problem, a synchronous game $\cG$ was exhibited where $E_{qc}(\cG)$ holds but $E_{qa}(\cG)$ fails \cite{JNVWY20}. We have been pointed by C. Paddock and W. Slofstra to an example of a synchronous game $\cG$ where $E_{C^*}(\cG)$ is true, but $E_{qc}(\cG)$ fails \cite{PS21}. For the homomorphism game $\text{Hom}(K_5,K_4)$ from the complete graph $K_5$ to the complete graph $K_4$, it is known that $\cA(\text{Hom}(K_5,K_4)) \neq \{0\}$, but $\text{Hom}(K_5,K_4)$ has no winning hereditary strategy. It is not known whether $E_{hered}(\cG)$ implies $E_{C^*}(\cG)$ for all synchronous games $\cG$. For a discussion on the relation between $E_{hered}(\cG)$ and $E_{C^*}(\cG)$, we refer the reader to \cite{HMPS19}.

One way to relate synchronous games and their winning strategies is via $*$-equivalence. This notion of equivalence has appeared in several places \cite{KPS18,G21}. Synchronous non-local games $\cG_1$ and $\cG_2$ are said to be $*$\textbf{-equivalent} if there are unital $*$-homomorphisms $\pi:\cA(\cG_1) \to \cA(\cG_2)$ and $\rho:\cA(\cG_2) \to \cA(\cG_1)$. We note that $\pi$ and $\rho$ need not be mutual inverses or even be injective. In particular, if $\cA(\cG_1)$ and $\cA(\cG_2)$ are $*$-isomorphic as algebras, then $\cG_1$ and $\cG_2$ are $*$-equivalent. For any $t \in \{loc,q,qa,qc,C^*,hered,A^*\}$, if the games $\cG_1$ and $\cG_2$ are $*$-equivalent, then $\cG_1$ has a winning $t$-strategy if and only if $\cG_2$ has a winning $t$-strategy. (This follows readily by composing the appropriate homomorphisms.)

For games whose game algebras are $*$-isomorphic via a nice enough isomorphism, the constraints on winning strategies in the models $loc,q,qa,qc$ are stronger than what is implied by $*$-equivalence.

\begin{theorem}
\label{theorem: isomorphism gives affine homeomorphisms}
Let $\cG_1=(I,O,\lambda)$ and $\cG_2=(\widetilde{I},\widetilde{O},\widetilde{\lambda})$ be synchronous non-local games, with game algebras $\cA(\cG_1)$ and $\cA(\cG_2)$, generated by elements $e_{a,x}, \, a \in O, \, x \in I$ and $f_{c,z}, \, c \in \widetilde{O}, \, z \in \widetilde{I}$, respectively. Suppose that $\pi:\cA(\cG_1) \to \cA(\cG_2)$ is a $*$-isomorphism such that
\begin{enumerate}
\item
$\displaystyle \pi(e_{a,x})=\sum_{\substack{c \in \widetilde{O}, \\ z \in \widetilde{I}}} \alpha_{a,x,c,z} f_{c,z}, \, \, \alpha_{a,x,c,z} \in \bC$; and
\item
$\displaystyle \pi^{-1}(f_{c,z})=\sum_{\substack{a \in O, \\ x \in I}} \beta_{c,z,a,x} e_{a,x}, \, \, \beta_{c,z,a,x} \in \bC$.
\end{enumerate}
Let $t \in \{loc,q,qa,qc\}$. Then $C_t(\lambda)$ and $C_t(\widetilde{\lambda})$ are affinely homeomorphic.
\end{theorem}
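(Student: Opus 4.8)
The plan is to pass through the standard correspondence between winning strategies and unital $*$-homomorphisms out of the game algebra, and then transport homomorphisms back and forth along $\pi$ and $\pi^{-1}$, using hypotheses (1)--(2) to recognize the induced maps on correlations as restrictions of linear maps. So the first thing I would record is the following refinement of the bulleted facts recalled above: for a fixed $t\in\{loc,q,qa,qc\}$, an element $p\in C_t(\lambda)$ is exactly the data of a unital $*$-homomorphism $\phi\colon\cA(\cG_1)\to\cB$, where $\cB$ is a unital $C^*$-algebra that is abelian, finite-dimensional, an ultrapower $\cR^{\cU}$, or arbitrary (according as $t=loc,q,qa,qc$), together with a tracial state $\tau$ on $\cB$ --- which, by the usual GNS reduction, may be taken faithful, the relevant class of algebras being stable under the quotient involved --- such that $p(a,b|x,y)=\tau(\phi(e_{a,x})\phi(e_{b,y}))$. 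One direction is immediate: given $\phi$ and $\tau$, the $\phi(e_{a,x})$ form a PVM for each $x$ and $\phi(e_{a,x})\phi(e_{b,y})=\phi(e_{a,x}e_{b,y})=0$ when $\lambda(a,b,x,y)=0$, so $p\in C_t(\lambda)$. For the converse, starting from $p\in C_t(\lambda)$ with the PVM--trace presentation guaranteed by $p\in C_t^s(n,k)$, faithfulness of $\tau$ together with the identities $\tau(E_{a,x}E_{b,y}E_{a,x})=\tau(E_{a,x}^2E_{b,y})=\tau(E_{a,x}E_{b,y})=p(a,b|x,y)=0$ and the positivity of $E_{a,x}E_{b,y}E_{a,x}=(E_{b,y}E_{a,x})^*(E_{b,y}E_{a,x})$ force $E_{a,x}E_{b,y}=0$ whenever $\lambda(a,b,x,y)=0$, so $e_{a,x}\mapsto E_{a,x}$ extends to a unital $*$-homomorphism out of $\cA(\cG_1)$. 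The identical statement holds for $\cG_2$ in terms of the $f_{c,z}$.

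Next I would define the two transport maps. Using the scalars $\beta_{c,z,a,x}$ from (2), let $\Phi$ be the linear map --- a priori $\bC$-valued, but, as will emerge, $\bR$-valued on $C_t(\lambda)$ --- given by
\[
\Phi(p)(c,d,z,w)=\sum_{\substack{a,b\in O\\ x,y\in I}}\beta_{c,z,a,x}\,\beta_{d,w,b,y}\,p(a,b,x,y),
\]
and symmetrically let $\Psi$ be the analogous linear map in the other direction built from the $\alpha_{a,x,c,z}$ of (1); both are continuous. Given $p\in C_t(\lambda)$ realized by $(\phi,\tau)$ as above, the composite $\phi\circ\pi^{-1}\colon\cA(\cG_2)\to\cB$ is again a unital $*$-homomorphism into an algebra of the same type, and expanding $\phi(\pi^{-1}(f_{c,z}))=\sum_{a,x}\beta_{c,z,a,x}\phi(e_{a,x})$ via (2) and applying $\tau$ shows that the correlation realized by $(\phi\circ\pi^{-1},\tau)$ is exactly $\Phi(p)$. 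By the dictionary applied to $\cG_2$, this already gives $\Phi(p)\in C_t(\widetilde\lambda)$, so $\Phi$ carries $C_t(\lambda)$ into $C_t(\widetilde\lambda)$; symmetrically, via $\phi\circ\pi$ and (1), $\Psi$ carries $C_t(\widetilde\lambda)$ into $C_t(\lambda)$.

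It then remains to see that $\Phi$ and $\Psi$ are mutually inverse on these sets. For $p\in C_t(\lambda)$ realized by $(\phi,\tau)$, the previous step shows $\Phi(p)$ is realized by $(\phi\circ\pi^{-1},\tau)$, hence (by the symmetric version of that step) $\Psi(\Phi(p))$ is realized by $(\phi\circ\pi^{-1}\circ\pi,\tau)=(\phi,\tau)$; since the correlation is read off from $(\phi,\tau)$ by the fixed formula $\tau(\phi(e_{a,x})\phi(e_{b,y}))$, this gives $\Psi(\Phi(p))=p$, and symmetrically $\Phi(\Psi(q))=q$ for all $q\in C_t(\widetilde\lambda)$. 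Therefore $\Phi$ restricts to a bijection $C_t(\lambda)\to C_t(\widetilde\lambda)$ whose inverse is the restriction of $\Psi$; as both are restrictions of continuous linear maps they are in particular affine and continuous, which is precisely an affine homeomorphism.

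The only real content here is the dictionary of the first paragraph --- in particular the implication $\tau(E_{a,x}E_{b,y})=0\Rightarrow E_{a,x}E_{b,y}=0$, which needs the faithfulness reduction and the trace property, and the observation that the argument runs uniformly across all four models because each of the four classes of algebras is stable under the relevant quotients and under composition with a $*$-homomorphism into the same algebra. Everything after that is formal bookkeeping with conditions (1)--(2). A cosmetic point worth a line: $\Phi$, regarded as a map on all of $\bC^{n^2k^2}$, need not send real or genuine probability vectors to such outside $C_t(\lambda)$, but this is harmless since it is only ever used, restricted to $C_t(\lambda)$, as a continuous affine map.
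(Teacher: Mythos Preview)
Your proof is correct and follows essentially the same approach as the paper: both define the transport maps via the explicit linear formulas in the $\alpha$'s and $\beta$'s, verify that they land in the correct winning-strategy set by pushing a tracial realization along $\pi$ or $\pi^{-1}$, and check that they are mutual inverses because $\pi\circ\pi^{-1}=\id$. The only cosmetic differences are that your $\Phi$ and $\Psi$ are interchanged relative to the paper's, and you spell out more carefully the step the paper leaves implicit---namely that faithfulness of $\tau$ together with the trace identity $\tau(E_{a,x}E_{b,y}E_{a,x})=\tau((E_{b,y}E_{a,x})^*(E_{b,y}E_{a,x}))$ forces $E_{a,x}E_{b,y}=0$ whenever $p(a,b|x,y)=0$, so that the PVM data really do define a unital $*$-homomorphism out of the game algebra.
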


\begin{proof}
We work in the case when $t=qc$; the other cases are similar. We define $\Phi:C_{qc}(\widetilde{\lambda}) \to C_{qc}(\lambda)$ by 
\[ (\Phi(r))(a,b|x,y)=\sum_{\substack{c,d \in \widetilde{O}, \\ z,w \in \widetilde{I}}} \alpha_{a,x,c,z}\alpha_{b,y,d,w} r(c,d|z,w), \, \, \, r \in C_{qc}(\widetilde{\lambda}),\]
and $\Psi:C_{qc}(\lambda) \to C_{qc}(\widetilde{\lambda})$ by
\[ (\Psi(p))(c,d|z,w)=\sum_{\substack{a,b \in O, \\ x,y \in I}} \beta_{c,z,a,x}\beta_{d,w,b,y} p(a,b|x,y), \, \, \, p \in C_{qc}(\lambda).\] 
To see that $\Phi(r)$ belongs to $C_{qc}(\lambda)$, we write $r(c,d|z,w)=\tau(F_{c,z}F_{d,w})$ for PVM's $\{F_{c,z}\}_{c \in \widetilde{O}}$ in a tracial $C^*$-algebra $(\cA,\tau)$, where $\tau$ is faithful. The definition of $r$ uniquely determines $\tau$ on all words of length two with letters in the generators $\{F_{c,z}\}$. Define $q_2:\cA(\cG_2) \to \cA$ by $q_2(f_{c,z})=F_{c,z}$, which is a unital $*$-homomorphism by universality of $\cA(\cG_2)$. Then define
\[ E_{a,x}=q_2 \circ \pi(e_{a,x})=\sum_{\substack{c \in \widetilde{O}, \\ z \in \widetilde{I}}} \alpha_{a,x,c,z} F_{c,z}, \, \, a \in O, \, x \in I.\]
Since $q_2$ and $\pi$ are unital $*$-homomorphisms, $\{E_{a,x}\}_{a \in O}$ is a PVM for each $x \in I$. Moreover,
\[ (\Phi(r))(a,b|x,y)=\tau(E_{a,x}E_{b,y}), \, \, a,b \in O, \, \, x,y \in I.\]
Note that $p$ is uniquely determined by $r$ and does not depend on the representation of $r$ in a tracial $C^*$-algebra $\cA$. In this way, one obtains a mapping $\Phi:C_t(\widetilde{\lambda}) \to C_t(\lambda)$. It is easy to see that $\Phi$ preserves convex combinations and is continuous. Similarly, $\Psi$ maps $C_{qc}(\lambda)$ into $C_{qc}(\widetilde{\lambda})$ and is a continuous affine mapping.

To see that $\Phi$ and $\Psi$ are mutual inverses, let $r(c,d|z,w)=\tau(F_{c,z}F_{d,w}) \in C_{qc}(\widetilde{\lambda})$ and $p(a,b|x,y)=\tau(E_{a,x}E_{b,y}) \in C_{qc}(\lambda)$ be as above, with $\Phi(r)=p$. Then in the tracial $C^*$-algebra $(\cA,\tau)$, we have
\[ \Psi(p)(c,d|z,w)=\tau(q_1 \circ \pi^{-1}(f_{c,z}) q_1 \circ \pi^{-1}(f_{d,w}))=\tau(q_1 \circ \pi^{-1}(f_{c,z}f_{d,w})),\]
where $q_1:\cA(\cG_1) \to \cA$ is the quotient mapping $q_1(e_{a,x})=E_{a,x}$ for all $a,x$. By construction, $q_1=q_2 \circ \pi$, so
\[ \Psi(p)(c,d|z,w)=\tau(q_2 \circ \pi \circ \pi^{-1}(f_{c,z}f_{d,w}))=\tau(q_2(f_{c,z}f_{d,w}))=\tau(F_{c,z}F_{d,w})=r(c,d|z,w).\]
Thus, $\Psi \circ \Phi(r)=r$ for every $r \in C_{qc}(\widetilde{\lambda})$. A similar argument shows that $\Phi \circ \Psi=\id_{C_{qc}(\lambda)}$, completing the proof in the case when $t=qc$.

For the remaining cases, the proof is almost identical. The only difference is that the correlation will be represented on an abelian $C^*$-algebra with a faithful (tracial) state in the case of $t=loc$. In the $t=q$ case, the correlation is represented on a finite-dimensional $C^*$-algebra, and in the $t=qa$ case, the correlation is represented in an ultrapower $\cR^{\cU}$ of the hyperfinite $II_1$-factor $\cR$ by a free ultrafilter $\cU$ over $\bN$, with the canonical trace.
\end{proof}

\section{Bisynchronous Games}

A non-local game $\cG=(I,O,\lambda)$ is said to be \textbf{bisynchronous} if:
\begin{align*}
\lambda(a,b,x,x)&=0, \, \text{ for all } a \neq b \text{ (synchronous rule)} \\
\lambda(a,a,x,y)&=0, \, \text{ for all } x \neq y \text{ (bisynchronous rule)}.
\end{align*}
Every synchronous game with $n$ inputs and $k$ outputs is $*$-equivalent to a bisynchronous game with $n$ inputs and $nk$ outputs. The main idea of this construction is that players respond to questions with allowable answers \textit{and} the question that they received \cite{PR21}. However, the case of bisynchronous games with identical question and answer sets is much more interesting: any winning hereditary strategy corresponds to a quantum permutation over a (hereditary) unital $*$-algebra $\cA$, subject to orthogonality conditions imposed by the rule function of the game. We recall that a quantum permutation over $\cA$ is a collection of elements $p_{ij} \in \cA$, where $1 \leq i,j \leq m$, such that each $p_{ij}$ is self-adjoint; $\displaystyle \sum_j p_{ij}=\sum_i p_{ij}=1$ for all $i,j$; and $p_{ij}p_{ik}=\delta_{jk} p_{ij}$ and $p_{ij}p_{\ell j}=\delta_{i\ell} p_{ij}$, where $\delta_{jk}$ denotes the usual Dirac delta. In this way, bisynchronous games on $m$ inputs and $m$ outputs are closely related to the quantum permutation group $S_m^+$ \cite{PR21}.

In the setting of a bisynchronous game with $m$ questions and $m$ answers and a winning hereditary strategy given by projections $\{E_{a,x}\}_{a,x=1}^m$, the quantum permutation obtained is the matrix $U=(E_{a,x})_{a,x=1}^m$. Indeed, each entry is a self-adjoint idempotent, and row and column sums are PVMs in the hereditary case \cite{PR21}. Some caution is required: in the game algebra $\cA(\cG)$ with generators $e_{a,x}$, while $\sum_{a=1}^m e_{a,x}=1$ and $e_{a,x}e_{b,x}=0$ for $a \neq b$ and $e_{a,x}e_{a,y}=0$ for $x \neq y$, there is no guarantee that $\sum_{x=1}^m e_{a,x}=1$; that is, the column sums of the matrix $(e_{a,x})_{a,x=1}^m$ are $1$, but the row sums might not be $1$ in general. For the bisynchronous games that we obtain, however, this extra condition will always hold in the game algebra.

The main idea of obtaining a bisynchronous game (with equal question and answer sets) from a synchronous game is as follows. Suppose that $p_0,...,p_{k-1}$ are self-adjoint idempotents in a unital $*$-algebra $\cA$, with $\sum_{a=0}^{k-1} p_a=1$ and $p_ap_b=0$ for $a \neq b$. Set $q_{a,b}=p_{a-b}$, where index calculations are done mod $k$. Then we obtain a quantum permutation $Q=(q_{a,b}) \in M_k(\cA)$, in the sense that every entry of $Q$ is a self-adjoint idempotent, every row sum and column sum of $Q$ is $1$, and $q_{a,b}q_{a,c}=0$ for $b \neq c$ and $q_{b,a}q_{c,a}=0$ for $b \neq c$. While this approach shows the algebraic motivation for the bisynchronous game, it is also helpful to see this bisynchronous game from the perspective of the rule function of the original synchronous game.

Given a synchronous game $\cG=(I,O,\lambda)$ with $O=\{0,1,...,k-1\}$, we briefly explain the idea of the bisynchronous game that we will use. The question and answer sets are both $O \times I$. In a round of the bisynchronous game, each player receives a question and an answer from the game $\cG$, represented by an element of $O \times I$. Suppose that Alice receives the pair $(a,x)$ and Bob receives the pair $(b,y)$. They respond with pairs $(i,v)$ and $(j,w)$, respectively, from $O \times I$. The players win if $x=v$ and $y=w$, and if the answers $a-i \mod{k}$ and $b-j \mod{k}$ are allowed for the question pair $(x,y)$ in the game $\cG$.

Next, we provide the formal definition of this bisynchronous game.

\begin{definition}
\label{definition: bisynchronous game associated to G}
Let $\cG=(I,O,\lambda)$ be a synchronous game with $|I|=n$ and $|O|=k$, and write $I=\{1,...,n\}$ and $O=\{0,...,k-1\}$. We define a bisynchronous game $\widetilde{\cG}=(O \times I,O \times I,\widetilde{\lambda})$ associated with $\cG$ by the rule function $\widetilde{\lambda}$ defined by
\begin{align}
\widetilde{\lambda}((i,v),(j,w),(a,x),(b,y))&=0 \text{ if } v \neq x \text{ or } w \neq y \label{no off-diagonal blocks for OxI}\\
\widetilde{\lambda}((i,v),(i,v),(a,x),(b,y))&=0 \text{ if } (a,x) \neq (b,y) \label{bisynchronous rule for OxI}\\
\widetilde{\lambda}((i,v),(j,w),(a,x),(a,x))&=0 \text{ if } (i,v) \neq (j,w) \label{synchronous rule for OxI}\\
\widetilde{\lambda}((i,x),(j,y),(a,x),(b,y))&=\lambda(a-i,b-j,x,y) \label{preserve lambda on OxI}
\end{align}
and $\widetilde{\lambda}((i,v),(j,w),(a,x),(b,y))=1$ otherwise. (In rule (\ref{preserve lambda on OxI}), index calculations are done modulo $k$.) 
\end{definition}

We first prove that certain relations involving the canonical generators of the game algebra $\cA(\widetilde{\cG})$ must hold.

\begin{lemma}
\label{lemma: relations in bisynchronous game}
Let $\cG=(I,O,\lambda)$ be a synchronous game with $I=\{1,...,n\}$ and $O=\{0,...,k-1\}$, and let $\widetilde{\cG}=(O\times I,O \times I,\widetilde{\lambda})$ be the bisynchronous game associated to $\cG$ as in Definition \ref{definition: bisynchronous game associated to G}. Let $f_{(i,v),(a,x)}$ be the canonical generators for $\cA(\widetilde{\cG})$, where $(i,v),(a,x) \in O \times I$. Then $f_{(i,v),(a,x)}=0$ if $v \neq x$, and $f_{(i,x),(a,x)}=f_{(j,x),(b,x)}$ whenever $a-i \equiv b-j \mod{k}$.
\end{lemma}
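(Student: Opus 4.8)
The plan is to exploit the defining relations of $\cA(\widetilde{\cG})$ coming from the rules (\ref{no off-diagonal blocks for OxI})--(\ref{preserve lambda on OxI}), together with the universal PVM conditions, and to extract from them the two asserted identities. The generators $f_{(i,v),(a,x)}$ are self-adjoint idempotents with $\sum_{(i,v) \in O \times I} f_{(i,v),(a,x)} = 1$ for every fixed $(a,x)$, and a product $f_{(i,v),(a,x)} f_{(j,w),(b,y)}$ vanishes whenever $\widetilde{\lambda}((i,v),(j,w),(a,x),(b,y))=0$.

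First I would prove that $f_{(i,v),(a,x)}=0$ when $v \neq x$. Apply rule (\ref{no off-diagonal blocks for OxI}) with $(b,y)=(a,x)$: for any $(j,w)$ with $w \neq x$ (there is at least one, and in fact one may sum over all such), the product $f_{(i,v),(a,x)} f_{(j,w),(a,x)}$ is zero. But within the single PVM indexed by the question pair $(a,x)$, we already have $f_{(i,v),(a,x)} f_{(j,w),(a,x)} = \delta_{(i,v),(j,w)} f_{(i,v),(a,x)}$ from orthogonality of the PVM; combining these, $f_{(i,v),(a,x)} = f_{(i,v),(a,x)} f_{(i,v),(a,x)} = 0$ as soon as $v \neq x$ (choosing $(j,w)=(i,v)$, which is legitimate since then $w = v \neq x$, so rule (\ref{no off-diagonal blocks for OxI}) applies with $w \neq y$). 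Hence every generator whose two $I$-coordinates disagree is zero, and consequently $\sum_{i \in O} f_{(i,x),(a,x)} = 1$ for every $(a,x)$, i.e. the surviving generators form a PVM of size $k$ for each question pair; this also shows the "row sums equal $1$" remark alluded to before the lemma.

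Next I would establish $f_{(i,x),(a,x)} = f_{(j,x),(b,x)}$ whenever $a - i \equiv b - j \pmod k$. By transitivity of the relation "$a-i \equiv b-j$" it suffices to treat the generating case, say $b = a+1$ (with $j = i+1$, indices mod $k$, and keeping the same first $I$-coordinate $x$), or more conveniently to compare a generator at question pair $(a,x)$ with one at a neighbouring question pair $(b,x)$ with $b \neq a$. Fix $x$ and two distinct answers $a,b \in O$. Consider the product $f_{(i,x),(a,x)} f_{(j,x),(b,x)}$: this corresponds to question pair $(x,x)$ on the $\cG$-side (both second coordinates are $x$). Rule (\ref{synchronous rule for OxI}) forces this product to vanish unless $(i,x) = (j,x)$, i.e. unless $i = j$; and rule (\ref{preserve lambda on OxI}) with $x = y$ leaves it unconstrained (equal to $\lambda(a-i, b-j, x, x)$, which since $\cG$ is synchronous is $0$ exactly when $a - i \neq b - j$, i.e. again when $i \neq j$ given... no, when $a-i \neq b-j$). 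So the only surviving products $f_{(i,x),(a,x)} f_{(j,x),(b,x)}$ are those with $i = j$ (from rule (\ref{synchronous rule for OxI})) \emph{and} $a - i \equiv b - j \pmod k$ (from synchrony of $\lambda$ via rule (\ref{preserve lambda on OxI})) — but if $i = j$ and $a \neq b$ these cannot both hold, so in fact $f_{(i,x),(a,x)} f_{(j,x),(b,x)} = 0$ for \emph{all} $i,j$ when $a \neq b$. Now expand $f_{(i,x),(a,x)} = f_{(i,x),(a,x)} \cdot 1 = \sum_{j \in O} f_{(i,x),(a,x)} f_{(j,x),(b,x)}$ using the PVM at $(b,x)$; by the preceding, this sum reduces to the single term $f_{(i,x),(a,x)} f_{(i + (b-a),x),(b,x)}$ — wait, I must recheck which term survives. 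Let me instead use rules (\ref{bisynchronous rule for OxI}) and (\ref{preserve lambda on OxI}) more carefully: the term $f_{(i,x),(a,x)} f_{(j,x),(b,x)}$ with $(a,x) \neq (b,x)$ (true since $a \neq b$) must vanish by rule (\ref{bisynchronous rule for OxI}) \emph{unless} $(i,x) \neq (j,x)$, i.e. unless $i \neq j$; combined with rule (\ref{preserve lambda on OxI}) giving $\lambda(a-i,b-j,x,x)$, synchrony forces $a - i \equiv b - j$. So the unique surviving term in $\sum_j f_{(i,x),(a,x)} f_{(j,x),(b,x)}$ is $j = i - a + b \pmod k$, yielding $f_{(i,x),(a,x)} = f_{(i,x),(a,x)} f_{(i-a+b,\,x),(b,x)}$. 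By symmetry (swapping the roles of $(a,x)$ and $(b,x)$), $f_{(i-a+b,x),(b,x)} = f_{(i-a+b,x),(b,x)} f_{(i,x),(a,x)}$. Taking adjoints and using self-adjointness, these two give $f_{(i,x),(a,x)} = f_{(i,x),(a,x)} f_{(i-a+b,x),(b,x)} = f_{(i-a+b,x),(b,x)} f_{(i,x),(a,x)} = f_{(i-a+b,x),(b,x)}$. Since $a - i \equiv b - (i-a+b) \pmod k$ always, and every pair $(i,a),(j,b)$ with $a-i\equiv b-j$ and the same $x$ arises this way, the claim follows.

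The main obstacle I anticipate is purely bookkeeping: keeping straight which of the four rules (\ref{no off-diagonal blocks for OxI})--(\ref{preserve lambda on OxI}) applies to a given product $f_{(i,v),(a,x)} f_{(j,w),(b,y)}$ — in particular, the interplay at the "diagonal" question pairs where $v = x$, $w = y$, and whether $x = y$ — and being careful that "$a-i \equiv b-j$" is read modulo $k$ throughout. The key algebraic device is the same in both parts: multiply a generator by $1$, rewrite $1$ as the sum of an orthogonal family from a different PVM, and observe that all but (at most) one summand is killed by a rule of $\widetilde{\lambda}$; then a symmetric computation upgrades the one-sided identity $e = ef$ to genuine equality $e = f$ via self-adjointness. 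No result beyond the universal properties of $\cA(\widetilde{\cG})$ and the synchrony of $\lambda$ is needed.
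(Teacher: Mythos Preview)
Your argument is correct and follows essentially the same route as the paper: rule~(\ref{no off-diagonal blocks for OxI}) on the diagonal kills $f_{(i,v),(a,x)}$ when $v\neq x$, and then rule~(\ref{preserve lambda on OxI}) together with the synchrony of $\lambda$ leaves exactly one term when $f_{(i,x),(a,x)}$ is expanded against the PVM at question $(b,x)$, after which your symmetric/adjoint trick gives equality---the paper does the same, only phrasing the last step as ``multiply $f_{(j,x),(b,x)}$ on the left by $1=\sum_{(\ell,y)} f_{(\ell,y),(a,x)}$'' instead of swapping roles and taking adjoints (and it uses only rule~(\ref{preserve lambda on OxI}), your invocation of rule~(\ref{bisynchronous rule for OxI}) being redundant since $i=j$ with $a\neq b$ already forces $a-i\neq b-j$). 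One small correction to your parenthetical: the identity $\sum_{i\in O} f_{(i,x),(a,x)}=1$ you obtain is just the original PVM (column-sum) relation restricted to the nonzero generators, not the row-sum relation $\sum_{(a,y)} f_{(i,x),(a,y)}=1$ alluded to before the lemma, which genuinely requires the second part of the lemma and is carried out in the proof of Theorem~\ref{theorem: bisynchronous equivalence}.
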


\begin{proof}
If $v \neq x$, then using Rule (\ref{no off-diagonal blocks for OxI}) with $x=y$, we obtain $f_{(i,v),(a,x)}=f_{(i,v),(a,x)}^2=0$. Next, we show that $f_{(i,x),(a,x)}=f_{(j,x),(b,x)}$ if $a-i=b-j$ mod $k$. Indeed, if $c,\ell \in O$ and $a-i \neq c-\ell$ mod $k$, then $\widetilde{\lambda}((i,x),(\ell,x),(a,x),(c,x))=\lambda(a-i,c-\ell,x,x)=0$ because $\cG$ is synchronous, so that $f_{(i,x),(a,x)}f_{(\ell,x),(c,x)}=0$ in $\cA(\widetilde{\cG})$.
Then, choosing $a,b \in O$ and using the fact that $\{f_{(\ell,y),(b,x)}\}_{(\ell,y) \in O \times I}$ is a PVM in $\cA(\widetilde{\cG})$, we have
\[ f_{(i,x),(a,x)}=\sum_{(\ell,y) \in O \times I} f_{(i,x),(a,x)}f_{(\ell,y),(b,x)}=\sum_{\ell \in O} f_{(i,x),(a,x)}f_{(\ell,x),(b,x)}=f_{(i,x),(a,x)}f_{(j,x),(b,x)},\]
where $j \in O$ is the unique element satisfying $a-i=b-j$ mod $k$. By multiplying $f_{(j,x),(b,x)}$ on the left by $1=\sum_{\ell,y} f_{(\ell,y),(a,x)}$ instead, we obtain $f_{(i,x),(a,x)}f_{(j,x),(b,x)}=f_{(j,x),(b,x)}$. Therefore, $f_{(i,x),(a,x)}=f_{(j,x),(b,x)}$ whenever $a-i \equiv b-j \mod k$.
\end{proof}

\begin{theorem}
\label{theorem: bisynchronous equivalence}
Let $\cG$ be a synchronous game on $n$ inputs and $k$ outputs. Then there is a bisynchronous game $\widetilde{\cG}$ on $nk$ inputs and $nk$ outputs such that the game algebras $\cA(\cG)$ and $\cA(\widetilde{\cG})$ are $*$-isomorphic.
\end{theorem}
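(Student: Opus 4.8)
The strategy is to exhibit the $*$-isomorphism and its inverse explicitly on generators, using Lemma \ref{lemma: relations in bisynchronous game} to collapse the generating set of $\cA(\widetilde{\cG})$ down to something matching $\cA(\cG)$. Throughout, indices lying in $O=\{0,\dots,k-1\}$ are computed modulo $k$.

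First I would reduce the generators of $\cA(\widetilde{\cG})$. By Lemma \ref{lemma: relations in bisynchronous game}, every generator $f_{(i,v),(a,x)}$ vanishes unless $v=x$, and $f_{(i,x),(a,x)}$ depends only on $x$ and on $a-i$; so setting $g_{c,x}:=f_{(0,x),(c,x)}$ we get $f_{(i,x),(a,x)}=g_{a-i,x}$, and $\{g_{c,x}\}_{c\in O,\,x\in I}$ generates $\cA(\widetilde{\cG})$. I claim these satisfy the defining relations of $\cA(\cG)$. Each $g_{c,x}$ is a self-adjoint idempotent since the $f$'s are. Fixing $x\in I$ and any $a\in O$, the set $\{f_{(\ell,x),(a,x)}\}_{\ell\in O}$ is a PVM (the terms with $y\neq x$ in the PVM $\{f_{(\ell,y),(a,x)}\}_{(\ell,y)\in O\times I}$ are zero by the lemma), and reindexing by $c=a-\ell$ shows $\{g_{c,x}\}_{c\in O}$ is a PVM; in particular $\sum_{c\in O}g_{c,x}=1$. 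Finally, applying Rule (\ref{preserve lambda on OxI}) with $i=j=0$ gives $g_{c,x}g_{d,y}=f_{(0,x),(c,x)}f_{(0,y),(d,y)}=0$ whenever $\lambda(c,d,x,y)=0$. By the universal property of $\cA(\cG)$, there is a unital $*$-homomorphism $\pi:\cA(\cG)\to\cA(\widetilde{\cG})$ with $\pi(e_{a,x})=g_{a,x}=f_{(0,x),(a,x)}$, which is surjective since the $g_{c,x}$ generate.

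Next I would construct a unital $*$-homomorphism $\rho:\cA(\widetilde{\cG})\to\cA(\cG)$ in the reverse direction by $\rho(f_{(i,v),(a,x)})=0$ if $v\neq x$ and $\rho(f_{(i,x),(a,x)})=e_{a-i,x}$. The real content here is checking that these images satisfy all the defining relations of $\cA(\widetilde{\cG})$, which is a case analysis over the rule function $\widetilde{\lambda}$. The images are self-adjoint idempotents or $0$; for a fixed question $(a,x)$ one has $\sum_{(i,v)\in O\times I}\rho(f_{(i,v),(a,x)})=\sum_{i\in O}e_{a-i,x}=1$; and for each way a $4$-tuple can lie in $\widetilde{\lambda}^{-1}(0)$ the corresponding product of images is $0$: under Rule (\ref{no off-diagonal blocks for OxI}) a factor is already $0$; under Rules (\ref{bisynchronous rule for OxI}) and (\ref{synchronous rule for OxI}) the product is either $0$ for the same reason or reduces to $e_{c,x}e_{d,x}$ with $c\neq d$, hence $0$; and under Rule (\ref{preserve lambda on OxI}) the product is $e_{a-i,x}e_{b-j,y}$, which lies in $\cI_\lambda$, hence is $0$, precisely when $\lambda(a-i,b-j,x,y)=0$. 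So $\rho$ is well defined.

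Finally I would verify that $\pi$ and $\rho$ are mutually inverse on generators: $\rho(\pi(e_{a,x}))=\rho(f_{(0,x),(a,x)})=e_{a,x}$; and $\pi(\rho(f_{(i,v),(a,x)}))$ equals $0=f_{(i,v),(a,x)}$ when $v\neq x$ (Lemma \ref{lemma: relations in bisynchronous game}) and equals $\pi(e_{a-i,x})=f_{(0,x),(a-i,x)}=f_{(i,x),(a,x)}$ when $v=x$ (again by the second clause of Lemma \ref{lemma: relations in bisynchronous game}, since $(a-i)-0\equiv a-i$). Hence $\pi$ is a $*$-isomorphism; as $\widetilde{\cG}$ is bisynchronous with $|O\times I|=nk$, the theorem follows. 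I expect the only genuine obstacle to be the bookkeeping in the case analysis for $\rho$—keeping the modular arithmetic in the outputs straight and noting that overlaps among the cases of $\widetilde{\lambda}$ are harmless, since every case forces the relevant product to $0$. I would also record that $\pi$ and $\pi^{-1}$ each send a generator to a scalar multiple of a single generator, so the hypotheses of Theorem \ref{theorem: isomorphism gives affine homeomorphisms} are met and $C_t(\lambda)$ and $C_t(\widetilde{\lambda})$ are affinely homeomorphic for $t\in\{loc,q,qa,qc\}$.
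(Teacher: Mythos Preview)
Your proof is correct and takes essentially the same approach as the paper: the paper's $\pi:\cA(\widetilde{\cG})\to\cA(\cG)$ and $\rho:\cA(\cG)\to\cA(\widetilde{\cG})$ are exactly your $\rho$ and $\pi$ respectively, with identical definitions and the same case analysis over the rules of $\widetilde{\lambda}$. Your closing remark that the isomorphism sends generators to scalar multiples of generators (hence Theorem~\ref{theorem: isomorphism gives affine homeomorphisms} applies) is what the paper records separately as Theorem~\ref{theorem: include synchronous in the bisynchronous}.
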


\begin{proof}
We write $\cG=(I,O,\lambda)$, where $|I|=n$, $|O|=k$ and $\lambda:O\times O\times I \times I \to \{0,1\}$ is the rule function. To make calculations easier, we write $O=\{0,1,...,k-1\}$. Consider the game $\widetilde{\cG}=(O \times I,O \times I,\widetilde{\lambda})$ associated with $\cG$ as in Definition \ref{definition: bisynchronous game associated to G}. Consider the game algebra $\cA(\cG)$, which is generated by self-adjoint idempotents $e_{a,x}$, for $0 \leq a \leq k-1$ and $1 \leq x \leq n$, satisfying $\sum_{a=0}^{k-1} e_{a,x}=1$ and $e_{a,x}e_{b,y}=0$ whenever $\lambda(a,b,x,y)=0$. In particular, we have $e_{a,x}e_{b,x}=0$ if $a \neq b$, since $\cG$ is a synchronous game. For $(i,v),(a,x) \in O \times I$, define 
\[p_{(i,v),(a,x)}=\begin{cases} 0 & v \neq x \\ e_{a-i,x} & v=x,\end{cases}\]
where we interpret $a-i$ as $a-i$ mod $k$. Then each $p_{(i,v),(a,x)}$ is a self-adjoint idempotent, while
\[ \sum_{(i,v) \in O \times I} p_{(i,v),(a,x)}=\sum_{i \in O} e_{a-i,x}=\sum_{b \in O} e_{b,x}=1.\]

If $v \neq x$, then since $p_{(i,v),(a,x)}=0$, we also have $p_{(i,v),(a,x)}p_{(j,w),(b,y)}=0$. Similarly, if $w \neq y$, then $p_{(i,v),(a,x)}p_{(j,w),(b,y)}=0$, so Rule (\ref{no off-diagonal blocks for OxI}) holds. If $(a,x) \neq (b,y)$, then
\[ p_{(i,v),(a,x)}p_{(i,v),(b,y)}=\delta_{vx}\delta_{vy}e_{a-i,x}e_{b-i,y}.\]
If $x \neq y$, then the above quantity is $0$. In the case when $x=y$, then since $(a,x) \neq (b,y)$, we have $a \neq b$, and we obtain $e_{a-i,x}e_{b-i,x}=0$. Thus, Rule (\ref{bisynchronous rule for OxI}) holds.

Next, suppose that $(i,v) \neq (j,w)$. Then we have
\[ p_{(i,v),(a,x)}p_{(j,w),(a,x)}=\delta_{vx}\delta_{wx} e_{a-i,x}e_{a-j,x},\]
which is $0$ when $v \neq w$. If $v=w$, then since $(i,v) \neq (j,w)$, we have $i \neq j$, so the above quantity is $e_{a-i,x}e_{a-j,x}=0$. Thus, rule (\ref{synchronous rule for OxI}) is satisfied. Lastly, if $a,b,i,j \in O$ and $x,y \in I$ are such that $\lambda(a-i,b-j,x,y)=0$, then we observe that
\[ p_{(i,x),(a,x)}p_{(j,y),(b,y)}=e_{a-i,x}e_{b-j,y}=0,\]
by definition of $\cA(\cG)$, so that Rule (\ref{preserve lambda on OxI}) holds. By the universal property of $\cA(\widetilde{\cG})$, there is a unital $*$-homomorphism $\pi:\cA(\widetilde{\cG}) \to \cA(\cG)$ satisfying $\pi(f_{(i,v),(a,x)})=p_{(i,v),(a,x)}$ for all $i,v,a,x$, where $f_{(i,v),(a,x)}$ are the canonical generators of $\cA(\widetilde{\cG})$. 

Conversely, consider the game algebra $\cA(\widetilde{\cG})$, with canonical generators $f_{(i,v),(a,x)}$, where $(i,v),(a,x) \in O \times I$. By Lemma \ref{lemma: relations in bisynchronous game}, $f_{(i,v),(a,x)}=0$ if $v \neq x$, and $f_{(i,x),(a,x)}=f_{(j,x),(b,x)}$ whenever $a-i \equiv b-j \mod{k}$. In the case when $b=0$ and $a,i$ are fixed, we obtain $f_{(i,x),(a,x)}=f_{(i-a,x),(0,x)}$, where we interpret $i-a$ as being done modulo $k$.

It follows that, for any $(i,x) \in O \times I$,
\[ \sum_{(a,y) \in O \times I} f_{(i,x),(a,y)}=\sum_{a \in O} f_{(i,x),(a,x)}=\sum_{a \in O} f_{(i-a,x),(0,x)}=\sum_{b \in O} f_{(b,x),(0,x)}=\sum_{(b,y) \in O \times I} f_{(b,y),(0,x)}=1.\]

We define
\[ q_{a,x}=f_{(0,x),(a,x)}, \, \forall a \in O, \, x \in I.\]
Then $q_{a,x}$ is a self-adjoint idempotent, and
\[ \sum_{a=0}^{k-1} q_{a,x}=\sum_{a=0}^{k-1} f_{(0,x),(a,x)}=\sum_{(a,y) \in O \times I} f_{(0,x),(a,y)}=1,\]
using the fact that, $\sum_{(a,y) \in O \times I} f_{(i,x),(a,y)}=1$ for any $(i,x) \in O \times I$. If $\lambda(a,b,x,y)=0$, then
\[ q_{a,x}q_{b,y}=f_{(0,x),(a,x)}f_{(0,y),(b,y)}=0,\]
by Rule (\ref{preserve lambda on OxI}). Therefore, the mapping $\rho:e_{a,x} \mapsto q_{a,x}$ extends to a unital $*$-homomorphism from $\cA(\cG)$ to $\cA(\widetilde{\cG})$.

Lastly, we show that $\pi$ and $\rho$ are mutual inverses. For $(a,x) \in O \times I$, we have $\pi \circ \rho(e_{a,x})=\pi(f_{(0,x),(a,x)})=e_{a,x}$ for all $(a,x) \in O \times I$, so that $\pi \circ \rho=\id_{\cA(\cG)}$. If $v \neq x$, then $f_{(i,v),(a,x)}=0$, so that $\rho \circ \pi(f_{(i,v),(a,x)})=0=f_{(i,v),(a,x)}$. In the other case, we have
\[ \rho \circ \pi(f_{(i,x),(a,x)})=\rho(e_{a-i,x})=f_{(0,x),(a-i,x)}=f_{(i,x),(a,x)}.\]
Therefore, $\rho \circ \pi$ is the identity on the generators of $\cA(\widetilde{\cG})$, implying that $\rho \circ \pi=\id_{\cA(\widetilde{\cG})}$. We conclude that $\pi$ and $\rho$ are mutual inverses, so that $\cA(\cG)$ and $\cA(\widetilde{\cG})$ are $*$-isomorphic.
\end{proof}

\begin{corollary}\label{corollary: bisynchronous game with bad algebra}
There exists a bisynchronous game $\cG$ with $20$ inputs and $20$ outputs such that $\cA(\cG) \neq (0)$, but $\cG$ has no winning $qc$-strategy.
\end{corollary}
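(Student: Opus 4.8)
The plan is to apply Theorem \ref{theorem: bisynchronous equivalence} to a synchronous game that already exhibits the desired pathology, choosing its parameters so that the product of inputs and outputs is exactly $20$. The natural candidate is the graph homomorphism game $\cG_0 = \text{Hom}(K_5,K_4)$, regarded as the synchronous game whose input set is the vertex set of $K_5$ and whose output set is the vertex set of $K_4$, so that $n = 5$ and $k = 4$. As recalled in Section \S1, it is known that $\cA(\cG_0) \neq \{0\}$ while $\cG_0$ has no winning hereditary strategy; since $E_{qc}(\cG_0) \implies E_{hered}(\cG_0)$, this means $\cG_0$ has no winning $qc$-strategy.

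First I would invoke Theorem \ref{theorem: bisynchronous equivalence} with the input game $\cG_0$, obtaining a bisynchronous game $\widetilde{\cG_0}$ on $nk = 20$ inputs and $nk = 20$ outputs together with a $*$-isomorphism $\cA(\cG_0) \cong \cA(\widetilde{\cG_0})$. Since $\cA(\cG_0) \neq \{0\}$, this immediately gives $\cA(\widetilde{\cG_0}) \neq \{0\}$.

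Next I would transfer the nonexistence of a winning $qc$-strategy across this isomorphism. Because $\cA(\cG_0)$ and $\cA(\widetilde{\cG_0})$ are $*$-isomorphic, the two games are $*$-equivalent in the sense of Section \S2, and $*$-equivalence preserves the statements $E_t$ for every $t \in \{loc,q,qa,qc,C^*,hered,A^*\}$ by composing the relevant unital $*$-homomorphisms. Applying this with $t = qc$ and using that $\cG_0$ has no winning $qc$-strategy, we conclude that $\widetilde{\cG_0}$ has no winning $qc$-strategy either. Setting $\cG = \widetilde{\cG_0}$ finishes the proof.

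There is essentially no genuine obstacle in this argument: the substantive content is carried entirely by Theorem \ref{theorem: bisynchronous equivalence} and by the already-established facts about $\text{Hom}(K_5,K_4)$. The only points that require a moment's care are (i) confirming that $\text{Hom}(K_5,K_4)$ is genuinely synchronous with precisely $5$ inputs and $4$ outputs, so that the game produced by Theorem \ref{theorem: bisynchronous equivalence} really has $20 = 5 \cdot 4$ inputs and $20$ outputs; and (ii) observing that ``no winning hereditary strategy'' is formally stronger than ``no winning $qc$-strategy'', which is all that is asked for -- indeed the same reasoning shows $\widetilde{\cG_0}$ has no winning hereditary strategy, and in fact no winning $C^*$-strategy.
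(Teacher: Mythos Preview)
Your proof is correct and follows essentially the same approach as the paper: apply Theorem \ref{theorem: bisynchronous equivalence} to $\text{Hom}(K_5,K_4)$ (with $n=5$, $k=4$) and transfer both the non-triviality of the game algebra and the absence of a winning $qc$-strategy across the resulting $*$-isomorphism. Your added remark that the game in fact lacks even a winning hereditary (or $C^*$) strategy is also correct and is noted by the paper in the paragraph following the corollary.
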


\begin{proof}
The homomorphism game $\text{Hom}(K_5,K_4)$ is a game with $5$ inputs and $4$ outputs that has $\cA(\text{Hom}(K_5,K_4)) \neq (0)$, but $\text{Hom}(K_5,K_4)$ has no winning $qc$-strategy \cite{HMPS19}. By the previous theorem, we obtain a bisynchronous game $\cG$ with $(5)(4)=20$ inputs and $20$ outputs with $\cA(\cG) \simeq \cA(\text{Hom}(K_5,K_4))$. Thus, $\cA(\cG) \neq (0)$, but $\cG$ has no winning $qc$-strategy.
\end{proof}

V. Paulsen and M. Rahaman asked in \cite{PR21} whether all bisynchronous games $\cG$ (with identical question and answer sets) satisfying $\cA(\cG)\neq(0)$ had a winning $qc$-strategy. Evidence in this direction was given by the graph isomorphism game \cite{BCEHPSW20} and the isometry game for finite metric spaces \cite{E20}, both of which have this property. However, Corollary \ref{corollary: bisynchronous game with bad algebra} answers this problem negatively. In fact, since $\cA(\text{Hom}(K_5,K_4))$ does not even have a winning hereditary strategy, the game in Corollary \ref{corollary: bisynchronous game with bad algebra} does not even have any representations into any unital hereditary $*$-algebra.

More oddities can be explored. For example, C. Paddock and W. Slofstra \cite{PS21} have constructed a synchronous non-local game that has no winning $qc$-strategy, but whose game algebra has a non-zero $C^*$-representation. Using Theorem \ref{theorem: bisynchronous equivalence}, it follows that there is a bisynchronous game $\cG=(I,O,\lambda)$ with $|I|=|O|$ such that $\cA(\cG)$ has a non-zero $C^*$-representation, but no tracial state (otherwise $\cG$ would have a winning $qc$-strategy). These examples are in contrast to the graph isomorphism game, where the existence of a trace, the existence of a $C^*$-representation, and the existence of the game algebra, are all equivalent \cite{BCEHPSW20}.

On the other hand, if one only deals with correlations rather than non-local games, then the synchronous correlations naturally occur as a closed face of a larger bisynchronous correlation set. This is similar to how $C_t^s(n,k)$ arises as a closed face of $C_t^s(nk,2)$ (see \cite{Ru20} and also \cite{Ha19}).

For a synchronous game $\cG=(I,O,\lambda)$, we recall that $C_t(\lambda)$ denotes the set of possible winning strategies for $\cG$, where $t \in \{loc,q,qa,qc\}$. Note that $C_t(\lambda)$ is a face in $C_t^s(n,k)$, where $n=|I|$ and $k=|O|$. Moreover, if $\cG$ is bisynchronous, then $C_t(\lambda)$ is a face in $C_t^{bs}(n,k)$.

\begin{theorem}
\label{theorem: include synchronous in the bisynchronous}
Let $\cG=(I,O,\lambda)$ be a synchronous game with $I=\{1,...,n\}$ and $O=\{0,...,k-1\}$, and let $\widetilde{\cG}=(O\times I,O\times I,\widetilde{\lambda})$ be the bisynchronous game associated with $\cG$ from Definition \ref{definition: bisynchronous game associated to G}. For each $t \in \{loc,q,qa,qc\}$, the sets $C_t(\lambda)$ and $C_t(\widetilde{\lambda})$ are affinely homeomorphic.

In particular, $C_t^s(n,k)$ is affinely homeomorphic to a face of $C_t^{bs}(nk,nk)$ for each $t \in \{loc,q,qa,qc\}$. Moreover, if $t \neq q$, then this face is closed.
\end{theorem}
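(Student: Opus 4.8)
The plan is to derive both assertions from Theorem \ref{theorem: bisynchronous equivalence} and Theorem \ref{theorem: isomorphism gives affine homeomorphisms}, together with the standard identification of $C_t^s(n,k)$ with the set of winning strategies of a ``full'' synchronous game.

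First I would check that the $*$-isomorphism constructed in the proof of Theorem \ref{theorem: bisynchronous equivalence} satisfies the hypotheses of Theorem \ref{theorem: isomorphism gives affine homeomorphisms}. Write $\pi \colon \cA(\widetilde{\cG}) \to \cA(\cG)$ for that isomorphism and $\rho = \pi^{-1} \colon \cA(\cG) \to \cA(\widetilde{\cG})$ for its inverse. On generators one has $\pi(f_{(i,v),(a,x)}) = p_{(i,v),(a,x)}$, which equals $0$ when $v \neq x$ and equals $e_{a-i,x}$ when $v = x$; in either case this is a $\bC$-linear combination of the generators of $\cA(\cG)$. Dually, $\rho(e_{a,x}) = q_{a,x} = f_{(0,x),(a,x)}$ is (trivially) a $\bC$-linear combination of the generators of $\cA(\widetilde{\cG})$. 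Hence $\pi$ meets conditions (1) and (2) of Theorem \ref{theorem: isomorphism gives affine homeomorphisms} (taking $\cG_1 = \widetilde{\cG}$, $\cG_2 = \cG$), and that theorem yields an affine homeomorphism between $C_t(\lambda)$ and $C_t(\widetilde{\lambda})$ for each $t \in \{loc,q,qa,qc\}$. This is the first assertion.

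For the ``in particular'' part, I would apply the first assertion to the full synchronous game $\cG_0 = (I,O,\lambda_0)$ on $n$ inputs and $k$ outputs, where $\lambda_0(a,b,x,y) = 0$ exactly when $x = y$ and $a \neq b$. A correlation in $C_t(n,k)$ vanishing on $\lambda_0^{-1}(\{0\})$ is precisely a synchronous $t$-correlation, and conversely $C_t^s(n,k) \subseteq C_t(n,k)$; thus $C_t(\lambda_0) = C_t^s(n,k)$. The associated game $\widetilde{\cG_0}$ of Definition \ref{definition: bisynchronous game associated to G} is bisynchronous with question and answer sets $O \times I$ of size $nk$, so $C_t(\widetilde{\lambda_0})$ is a face of $C_t^{bs}(nk,nk)$. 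Composing with the affine homeomorphism $C_t^s(n,k) = C_t(\lambda_0) \to C_t(\widetilde{\lambda_0})$ from the first assertion exhibits $C_t^s(n,k)$ as affinely homeomorphic to that face.

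Finally, for closedness when $t \neq q$: the face $C_t(\widetilde{\lambda_0})$ is obtained from $C_t^{bs}(nk,nk)$ by imposing the finitely many vanishing conditions $p(\cdot\,|\,\cdot) = 0$ indexed by $\widetilde{\lambda_0}^{-1}(\{0\})$, each of which is a closed condition (a nonnegative coordinate of $p$ is set to $0$). Since $C_t^{bs}(nk,nk)$ is itself closed for $t \in \{loc,qa,qc\}$ — the ambient sets $C_t(nk,nk)$ being closed in those cases, and the (bi)synchronicity conditions cutting $C_t^{bs}$ out of $C_t$ being closed — the face $C_t(\widetilde{\lambda_0})$ is closed. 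The only steps that are more than bookkeeping are the identity $C_t(\lambda_0) = C_t^s(n,k)$ (standard, but worth recording explicitly) and the verification of the linear form of $\pi$; I do not anticipate a genuine obstacle, since the substance is already carried by Theorems \ref{theorem: bisynchronous equivalence} and \ref{theorem: isomorphism gives affine homeomorphisms}.
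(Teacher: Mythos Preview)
Your proposal is correct and follows essentially the same approach as the paper: verify that the isomorphism of Theorem \ref{theorem: bisynchronous equivalence} sends generators to linear combinations of generators (and likewise for its inverse), invoke Theorem \ref{theorem: isomorphism gives affine homeomorphisms}, and then specialize to the trivial synchronous game $\lambda_0$ to identify $C_t^s(n,k)$ with $C_t(\lambda_0)$. The only cosmetic difference is in the closedness argument---the paper deduces it from the closedness of $C_t^s(n,k)$ (hence compactness, transported by the homeomorphism), while you argue directly that $C_t(\widetilde{\lambda_0})$ is cut out by closed conditions; both are fine.
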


\begin{proof}
We observe that the isomorphism $\cA(\cG) \to \cA(\widetilde{\cG})$ in Theorem \ref{theorem: bisynchronous equivalence} sends generators to sums of generators, as does its inverse. As a result of Theorem \ref{theorem: isomorphism gives affine homeomorphisms}, the sets $C_t(\lambda)$ and $C_t(\widetilde{\lambda})$ are affinely homeomorphic. The claim about $C_t^s(n,k)$ being affinely homeomorphic to a face of $C_t^{bs}(nk,nk)$ follows from these theorems and the fact that  $C_t^s(n,k)$ is the set of winning $t$-strategies for the trivial synchronous game given by
\[ \lambda(a,b|x,y)=\begin{cases} 1 & x \neq y \\ \delta_{ab} & x=y. \end{cases}\]
Lastly, this face will be closed if $t \neq q$, since the set $C_t^s(n,k)$ is closed for $t \in \{loc,qa,qc\}$.
\end{proof}

\begin{corollary}\label{corollary: synchronous separations to bisynchronous separations}
If $t_1,t_2 \in \{loc,q,qa,qc\}$ and $C_{t_1}^s(n,k) \neq C_{t_2}^s(n,k)$, then $C_{t_1}^{bs}(nk,nk) \neq C_{t_2}^{bs}(nk,nk)$.
\end{corollary}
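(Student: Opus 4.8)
The plan is to prove the contrapositive: if $C_{t_1}^{bs}(nk,nk) = C_{t_2}^{bs}(nk,nk)$, then $C_{t_1}^s(n,k) = C_{t_2}^s(n,k)$. Write $N = nk$, let $\cG_0 = (I,O,\lambda_0)$ be the trivial synchronous game on $n$ inputs and $k$ outputs, so that $C_t(\lambda_0) = C_t^s(n,k)$ for every $t$ (as noted in the proof of Theorem \ref{theorem: include synchronous in the bisynchronous}), and let $\widetilde{\cG_0} = (O\times I,\, O\times I,\, \widetilde{\lambda_0})$ be the bisynchronous game associated with $\cG_0$ via Definition \ref{definition: bisynchronous game associated to G}; it has $N$ questions and $N$ answers, and its rule function $\widetilde{\lambda_0}$ is fixed independently of $t$.

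The two facts I would use are: (i) the face $C_t(\widetilde{\lambda_0})$ of $C_t^{bs}(N,N)$ is cut out by $t$-independent linear conditions, and (ii) the affine homeomorphism $C_t^s(n,k) \cong C_t(\widetilde{\lambda_0})$ is implemented by a single linear map, the same for all $t$. For (i), let $V \subseteq \bR^{N^4}$ be the linear subspace of tuples vanishing on $\widetilde{\lambda_0}^{-1}(\{0\})$; since rules \eqref{bisynchronous rule for OxI} and \eqref{synchronous rule for OxI} already force every element of $C_t(N,N) \cap V$ to be bisynchronous, we get $C_t(\widetilde{\lambda_0}) = C_t(N,N) \cap V = C_t^{bs}(N,N) \cap V$ for each $t$. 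For (ii), recall from the proof of Theorem \ref{theorem: bisynchronous equivalence} that the $*$-isomorphism $\cA(\cG_0) \simeq \cA(\widetilde{\cG_0})$ and its inverse both carry generators to sums of generators; hence Theorem \ref{theorem: isomorphism gives affine homeomorphisms} applies, and the homeomorphism $\Phi \colon C_t(\widetilde{\lambda_0}) \to C_t(\lambda_0)$ is given by the explicit formula in that proof, whose coefficients depend only on the isomorphism and not on $t$. Consequently there is one linear map $L \colon \bR^{N^4} \to \bR^{n^2 k^2}$ such that, for every $t$, $L$ restricts to an affine homeomorphism of $C_t(\widetilde{\lambda_0})$ onto $C_t^s(n,k)$.

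Granting (i) and (ii), the conclusion is short. Assume $C_{t_1}^{bs}(N,N) = C_{t_2}^{bs}(N,N)$. Intersecting both sides with the fixed subspace $V$ and using (i) gives $C_{t_1}(\widetilde{\lambda_0}) = C_{t_2}(\widetilde{\lambda_0})$, and then applying the map $L$ from (ii) gives
\[ C_{t_1}^s(n,k) = L\big(C_{t_1}(\widetilde{\lambda_0})\big) = L\big(C_{t_2}(\widetilde{\lambda_0})\big) = C_{t_2}^s(n,k), \]
which is the contrapositive of the claim. The only step that needs care is the uniformity assertion in (ii) --- that the realization of $C_t^s(n,k)$ as a face of $C_t^{bs}(N,N)$ is effected by one linear map and one linear subspace, not depending on $t$ --- but this is transparent from the explicit formulas in the proofs of Theorems \ref{theorem: isomorphism gives affine homeomorphisms} and \ref{theorem: bisynchronous equivalence}; everything else is routine bookkeeping.
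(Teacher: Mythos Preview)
Your argument is correct and is precisely the argument the paper intends: the corollary is stated without proof as an immediate consequence of Theorem~\ref{theorem: include synchronous in the bisynchronous}, and you have simply made explicit the one point the paper leaves tacit, namely that the face $C_t(\widetilde{\lambda_0}) = C_t^{bs}(nk,nk)\cap V$ and the affine homeomorphism $L$ coming from Theorems~\ref{theorem: isomorphism gives affine homeomorphisms} and~\ref{theorem: bisynchronous equivalence} are both independent of $t$. There is no substantive difference in approach.
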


\begin{corollary}
\label{corollary: bisynchronous separation}
Let $\cG$ be a synchronous game with $|I|=n$ and $|O|=k$, with a winning $qc$-strategy, but no winning $qa$-strategy (see \cite[Theorem~12.11]{JNVWY20}). Then there is a bisynchronous game $\widetilde{\cG}$ with $nk$ inputs and $nk$ outputs with a winning $qc$-strategy, but no winning $qa$-strategy. In particular, $C_{qa}^{bs}(nk,nk) \neq C_{qc}^{bs}(nk,nk)$.
\end{corollary}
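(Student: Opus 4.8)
The plan is to apply Theorem~\ref{theorem: bisynchronous equivalence} directly to the synchronous game $\cG$ furnished by \cite[Theorem~12.11]{JNVWY20}. Let $\widetilde{\cG}$ be the bisynchronous game on $nk$ inputs and $nk$ outputs associated with $\cG$ as in Definition~\ref{definition: bisynchronous game associated to G}. Theorem~\ref{theorem: bisynchronous equivalence} provides a $*$-isomorphism $\cA(\cG)\simeq\cA(\widetilde{\cG})$ of unital $*$-algebras. Recalling that, for $t\in\{qa,qc\}$, a synchronous game has a winning $t$-strategy exactly when its game algebra admits a unital $*$-homomorphism into the relevant target (an ultrapower $\cR^{\cU}$ when $t=qa$, a unital tracial $C^*$-algebra when $t=qc$), this $*$-isomorphism carries winning $t$-strategies between $\cG$ and $\widetilde{\cG}$ in both directions. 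Hence $\widetilde{\cG}$ has a winning $qc$-strategy because $\cG$ does, and $\widetilde{\cG}$ has no winning $qa$-strategy because $\cG$ has none: a winning $qa$-strategy for $\widetilde{\cG}$ would, via the isomorphism $\cA(\cG)\to\cA(\widetilde{\cG})$, yield a unital $*$-homomorphism $\cA(\cG)\to\cR^{\cU}$ and thus a winning $qa$-strategy for $\cG$.

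For the final assertion, I would note that $\widetilde{\cG}$ is bisynchronous with equal question and answer sets of size $nk$, so $C_{qc}(\widetilde{\lambda})\subseteq C_{qc}^{bs}(nk,nk)$ is nonempty (by the preceding paragraph, or by Theorem~\ref{theorem: include synchronous in the bisynchronous}, since $C_{qc}(\lambda)\neq\emptyset$), while $C_{qa}(\widetilde{\lambda})=\emptyset$. Since $C_{qa}(\widetilde{\lambda})$ and $C_{qc}(\widetilde{\lambda})$ are cut out of $C_{qa}^{bs}(nk,nk)$ and $C_{qc}^{bs}(nk,nk)$, respectively, by the same linear equalities coming from $\widetilde{\lambda}$, the equality $C_{qa}^{bs}(nk,nk)=C_{qc}^{bs}(nk,nk)$ would force $C_{qa}(\widetilde{\lambda})=C_{qc}(\widetilde{\lambda})$, which is absurd. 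Equivalently, the inequality is immediate from Corollary~\ref{corollary: synchronous separations to bisynchronous separations} once one observes that a winning $qc$-strategy $p$ for $\cG$ lies in $C_{qc}^s(n,k)\setminus C_{qa}^s(n,k)$ (were it a $qa$-correlation it would be a winning $qa$-strategy for $\cG$), so that $C_{qa}^s(n,k)\neq C_{qc}^s(n,k)$.

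I do not anticipate a genuine obstacle: the substance is already contained in Theorem~\ref{theorem: bisynchronous equivalence} and in the $*$-equivalence discussion preceding Theorem~\ref{theorem: isomorphism gives affine homeomorphisms}. The only points deserving a sentence of care are checking that the \emph{absence} of a winning $qa$-strategy transfers across the isomorphism (handled above by pulling back a would-be $*$-homomorphism into $\cR^{\cU}$) and passing from a statement about game algebras to the stated inequality of correlation sets (handled by the face argument above). Both steps are routine.
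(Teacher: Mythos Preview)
Your proposal is correct and follows essentially the same approach the paper intends: the corollary is stated without proof because it is immediate from Theorem~\ref{theorem: bisynchronous equivalence} (transferring existence/non-existence of winning $t$-strategies across the $*$-isomorphism via the $*$-equivalence discussion) together with the observation that a winning $qc$-strategy for $\widetilde{\cG}$ that lay in $C_{qa}^{bs}(nk,nk)$ would be a winning $qa$-strategy. Your added care in spelling out the contrapositive for the $qa$ direction and the face/slice argument for the correlation-set inequality is fine but not essential.
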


The first part of the next corollary is certainly known due to the resolution of Connes' embedding problem \cite{JNVWY20}; the second part appears to be new and relies on Corollary \ref{corollary: bisynchronous separation}. In what follows, we denote by $C(S_N^+)$ the universal unital $C^*$-algebra generated by elements $u_{ij}$, $1 \leq i,j \leq n$, such that $u_{ij}^2=u_{ij}^*=u_{ij}$ for all $i,j$, and $U=(u_{ij})$ is unitary. (Equivalently, each $u_{ij}$ is an orthogonal projection and $\sum_{i=1}^n u_{ij}=\sum_{j=1}^n u_{ij}=1$ for all $i,j$.) This algebra arises as the full $C^*$-algebra of the quantum permutation group on $N$ letters.

\begin{corollary}
If $C_{qa}^s(n,k) \neq C_{qc}^s(n,k)$ and $C(S_N^+)$ represents the full ($C^*$-algebraic) quantum permutation group on $N$ letters, then $C(S_{nk}^+) \otimes_{\min} C(S_{nk}^+) \neq C(S_{nk}^+) \otimes_{\max} C(S_{nk}^+)$. Moreover, this non-isomorphism is witnessed at the level of tensor products of the canonical generators of $C(S_{nk}^+)$.
\end{corollary}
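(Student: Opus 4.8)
The plan is to deduce the statement from Corollary~\ref{corollary: synchronous separations to bisynchronous separations} together with (i) the identification of bisynchronous winning strategies on $m$ letters with traces on $C(S_m^+)$, and (ii) the link between amenable traces and the minimal tensor norm. Throughout write $m:=nk$ and $A:=C(S_m^+)$, with canonical generators $u_{a,x}$, $1\le a,x\le m$. By Corollary~\ref{corollary: synchronous separations to bisynchronous separations}, the hypothesis $C_{qa}^s(n,k)\ne C_{qc}^s(n,k)$ already gives $C_{qa}^{bs}(m,m)\ne C_{qc}^{bs}(m,m)$, so the task is to see that this forces $A\otimes_{\min}A\ne A\otimes_{\max}A$.

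First I would record the dictionary between bisynchronous winning strategies on $m$ letters and traces on $A$. As recalled before Definition~\ref{definition: bisynchronous game associated to G} (following \cite{PR21}), a winning $qc$-strategy for the trivial bisynchronous game is a magic unitary $(E_{a,x})_{a,x=1}^m$ in a unital tracial $C^*$-algebra $(\mathcal B,\tau)$: the $E_{a,x}$ are PVMs, and the bisynchronous rule, via faithfulness of the GNS trace, forces $E_{a,x}E_{a,y}=0$ for $x\ne y$, which together with $\sum_a E_{a,x}=1$ makes the column sums of $(E_{a,x})$ equal $1$ as well. Composing with the universal $*$-homomorphism $A\to\mathcal B$, $u_{a,x}\mapsto E_{a,x}$, one obtains: $p\in C_{qc}^{bs}(m,m)$ iff $p(a,b|x,y)=\tau(u_{a,x}u_{b,y})$ for some tracial state $\tau$ on $A$; and (using \cite{KPS18}) $p\in C_{qa}^{bs}(m,m)$ iff in addition $\tau$ may be taken to factor through an ultrapower $\cR^{\cU}$, i.e.\ $\tau$ is amenable. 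Hence if every tracial state on $A$ is amenable, then $C_{qc}^{bs}(m,m)=C_{qa}^{bs}(m,m)$.

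Next I would use two structural facts about $A=C(S_m^+)$. Since the transpose $(u_{a,x}^{\mathrm{op}})_{a,x}$ is again a magic unitary in $A^{\mathrm{op}}$, universality of $C(S_m^+)$ produces a $*$-isomorphism $A\to A^{\mathrm{op}}$ fixing the generators, and hence $*$-isomorphisms $A\otimes_\alpha A\cong A\otimes_\alpha A^{\mathrm{op}}$ for $\alpha\in\{\min,\max\}$, compatibly with tensor products of generators. Now suppose, toward a contradiction, that $A\otimes_{\min}A=A\otimes_{\max}A$ (equal $C^*$-norms on $A\odot A$); then the same holds for $A\odot A^{\mathrm{op}}$. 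For any tracial state $\tau$ on $A$, the GNS construction gives commuting representations of $A$ and $A^{\mathrm{op}}$ on $L^2(A,\tau)$, so $a\otimes b^{\mathrm{op}}\mapsto\tau(ab)$ is a vector state of a representation of $A\otimes_{\max}A^{\mathrm{op}}$; as that norm now coincides with the minimal one, this functional extends to a state on $A\otimes_{\min}A^{\mathrm{op}}$, which is one of the standard characterizations of amenability of $\tau$ (see, e.g., \cite{Oz13}). Thus every tracial state on $A$ is amenable, whence $C_{qc}^{bs}(m,m)=C_{qa}^{bs}(m,m)$ by the previous paragraph, contradicting Corollary~\ref{corollary: synchronous separations to bisynchronous separations}. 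Therefore $A\otimes_{\min}A\ne A\otimes_{\max}A$.

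The main obstacle, and the part I would have to be most careful about, is the final clause — that the non-isomorphism is ``witnessed at the level of tensor products of the canonical generators.'' The natural reading is that the two operator system structures $\mathcal V_{\min}$ and $\mathcal V_{\max}$ on $\mathcal V:=\operatorname{span}\{u_{a,x}\otimes u_{b,y}\}\subseteq A\otimes A$ (a self-adjoint unital subspace, since $\sum_a u_{a,x}=\sum_x u_{a,x}=1$) already differ. One inclusion is free: a tracial state $\tau$ on $A$ yields, via the isomorphism above, a state $\varphi_\tau$ on $\mathcal V_{\max}$ with $\varphi_\tau(u_{a,x}\otimes u_{b,y})=\tau(u_{a,x}u_{b,y})$, automatically satisfying the symmetry constraint $\varphi_\tau(u_{a,x}\otimes u_{b,y})=\varphi_\tau(u_{b,y}\otimes u_{a,x})$ and the bisynchrony constraints $\varphi_\tau(u_{a,x}\otimes u_{a,y})=0$ $(x\ne y)$, $\varphi_\tau(u_{a,x}\otimes u_{b,x})=0$ $(a\ne b)$. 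The work is to prove the converse — that $C_{qc}^{bs}(m,m)$ and $C_{qa}^{bs}(m,m)$ are \emph{precisely} the sets of such constrained states on $\mathcal V_{\max}$ and $\mathcal V_{\min}$ — which requires upgrading a generic Arveson extension of a constrained state to a tracial one by an averaging/symmetrization argument, together with the operator system formulation of synchronous correlations. Granting that identification, an operator system isomorphism $\mathcal V_{\min}\cong\mathcal V_{\max}$ would match the constrained state spaces and hence force $C_{qc}^{bs}(m,m)=C_{qa}^{bs}(m,m)$, again contradicting Corollary~\ref{corollary: synchronous separations to bisynchronous separations}; so $\mathcal V_{\min}\ne\mathcal V_{\max}$.
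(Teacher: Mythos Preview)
Your main argument is essentially the paper's own proof, just run contrapositively: both use Corollary~\ref{corollary: synchronous separations to bisynchronous separations} to pass to a bisynchronous separation, identify $C_{qc}^{bs}(m,m)$ with traces on $A=C(S_m^+)$ and $C_{qa}^{bs}(m,m)$ with amenable traces via \cite{KPS18}, and then invoke Kirchberg's characterization of amenability via the state $a\otimes b^{\mathrm{op}}\mapsto\tau(ab)$ on $A\otimes_{\max}A^{\mathrm{op}}$. You make explicit the isomorphism $A\cong A^{\mathrm{op}}$ (via the transpose of the magic unitary) that the paper leaves implicit when passing from $A\otimes A^{\mathrm{op}}$ to $A\otimes A$; that is a genuine clarification.

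Where you diverge is in the final clause. The paper's reading of ``witnessed at the level of tensor products of the canonical generators'' is much lighter than yours: it simply means that the separating state $s$ constructed in the proof is completely specified by its values $s(u_{a,x}\otimes u_{b,y})=p(a,b|x,y)$ on elementary tensors of generators, and it is \emph{this} state that fails to factor through the minimal tensor product. No operator-system comparison $\mathcal V_{\min}$ versus $\mathcal V_{\max}$ is claimed or needed. Your stronger reading---that the two operator-system structures on $\mathrm{span}\{u_{a,x}\otimes u_{b,y}\}$ already differ---may well be true, but it is not what the paper asserts, and the ``upgrading'' step you flag (extending a constrained state on $\mathcal V$ to a \emph{tracial} state on $A$) is not addressed in the paper and would indeed require a separate argument. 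So your proposal is correct for the main statement and over-engineers the last sentence; replacing your final paragraph with the observation that the separating functional is given on generators by the correlation $p$ would align it exactly with the paper.
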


\begin{proof}
By Corollary \ref{corollary: bisynchronous separation}, we can choose an element $(p(a,b|x,y)) \in C_{qc}^{bs}(nk,nk) \setminus C_{qa}^{bs}(nk,nk)$. Then there are projections $E_{a,x}$ in a unital $C^*$-algebra $\cA$ and a faithful trace $\tau$ on $\cA$ such that $p(a,b|x,y)=\tau(E_{a,x}E_{b,y})$; moreover, $E=(E_{a,x}) \in M_{nk}(\cA)$ is a quantum permutation. Let $U=(u_{a,x}) \in M_{2nk}(C(S_{nk}^+))$ be the fundamental unitary. By universality of $C(S_{nk}^+)$, there is a unital $*$-homomorphism $\pi:C(S_{nk}^+) \to \cA$ satisfying $\pi(u_{a,x})=E_{a,x}$ for each $a,x$. Then $\rho=\tau \circ \pi$ is a trace on $C(S_{nk}^+)$ such that $\rho(u_{a,x}u_{b,y})=p(a,b|x,y)$. Then the linear functional $s:C(S_{nk}^+) \otimes_{\max} C(S_{nk}^+)^{op} \to \bC$ given by $s(x \otimes y^{op})=\rho(xy)$ is a state with respect to the maximal tensor product (see, for example, \cite{Oz13}).

If $s$ factors through the minimal tensor product, then a result of Kirchberg (\cite[Proposition 3.2]{Ki94}, \cite[Theorem 6.2.7]{BO08}) implies that $\rho$ is an amenable trace. By \cite[Theorem 3.6]{KPS18}, the correlation $(p(a,b|x,y))$ belongs to $C_{qa}^s(nk,nk)$. As $p$ satisfies the bisynchronous condition, we obtain $p \in C_{qa}^{bs}(nk,nk)$, contradicting Corollary \ref{corollary: bisynchronous separation}. Hence, $C(S_{nk}^+) \otimes_{\min} C(S_{nk}^+) \neq C(S_{nk}^+) \otimes_{\max} C(S_{nk}^+)$. 
\end{proof}

We would like to briefly point out how Corollary \ref{corollary: synchronous separations to bisynchronous separations} applies to other known separations of models. The smallest known separation between $C_q^s$ and $C_{qa}^s$ is $C_q^s(5,2) \neq C_{qa}^s(5,2)$ \cite{DPP19}. It directly follows, by Corollary \ref{corollary: synchronous separations to bisynchronous separations} that $C_q^{bs}(10,10) \neq C_{qa}^{bs}(10,10)$. As the smallest known game separation between the $q$ and $qa$ models for bisynchronous games is a game with $24$ inputs and $24$ outputs (arising from graphs that are $qa$-isomorphic but not $q$-isomorphic \cite{KPS18}), this appears to be a significant reduction. We note, though, that the separation $C_q^s(5,2) \neq C_{qa}^s(5,2)$ is not directly due to a synchronous game, and so our separation $C_q^{bs}(10,10) \neq C_{qa}^{bs}(10,10)$ is also not directly due to a synchronous game. Moreover, it is not even known if $\overline{C_q^{bs}(m,m)}=C_{qa}^{bs}(m,m)$ in general \cite{PR21}, which contrasts with the synchronous setting, where $\overline{C_q^s(n,k)}=C_{qa}^s(n,k)$ for all $n,k$ \cite{KPS18}.

On the other hand, the fact that $C_{loc}^s(3,2) \neq C_q^s(3,2)$ \cite{LR17} implies that $C_{loc}^{bs}(6,6) \neq C_q^{bs}(6,6)$. It should be noted that Corollary \ref{corollary: synchronous separations to bisynchronous separations} above does not yield any information when $n=k=2$. Indeed, B. Lackey and N. Rodrigues proved that, for $n \geq 2$, the set $C_{loc}^s(2,n)$ is precisely the set of all \textit{non-signalling} synchronous correlations that are symmetric; that is, for which $p(a,b|x,y)=p(b,a|y,x)$ holds for all $a,b,x,y$. It is well-known that every quantum commuting synchronous correlation is symmetric, since such correlations are given by tracial states \cite{PSSTW16}. Thus, $C_{loc}^s(2,n)=C_{qc}^s(2,n)$. We refer the reader to \cite[Appendix~A]{LR17} for the details.

We would like to point out that the maps from $C_t^s(n,k)$ into $C_t^{bs}(nk,nk)$ above do not work as nicely in the general framework of non-signalling correlations. We recall that the set of \textbf{non-signalling correlations} $C_{ns}(n,k)$ is simply the set of tuples $(p(a,b|x,y))$, $1 \leq a,b \leq k$ and $ 1\leq x,y \leq n$, satisfying
\begin{itemize}
\item
$p(a,b|x,y) \geq 0$ and $\displaystyle \sum_{a,b=1}^k p(a,b|x,y)=1$ for all $x,y$;
\item
$\displaystyle \sum_{b=1}^k p(a,b|x,y)$ is independent of the choice of $y$;
\item
$\displaystyle \sum_{a=1}^k p(a,b|x,y)$ is independent of the choice of $x$.
\end{itemize}
The last two conditions imply that there are well-defined marginal distributions given by
\[ p(a|x)=\sum_{b=1}^k p(a,b|x,y) \text{ and } p(b|y)=\sum_{a=1}^k p(a,b|x,y).\]
With $\lambda$ as the trivial synchronous rule function and $\widetilde{\lambda}$ as the associated bisynchronous rule function, it's not hard to see that, if $t \in \{loc,q,qa,qc\}$, then elements of $C_t(\widetilde{\lambda})$ carry the beneficial property that $q((i,v),(j,w)|(a,x),(b,y))=\delta_{vx}\delta_{wy} p(a-i,b-j|x,y)$ for a unique $p \in C_t(\lambda)$. However, $C_{ns}(\widetilde{\lambda})$ contains elements that fail this extra property in general. Indeed, consider the case when $|I|=|O|=2$ and $\lambda$ is the trivial synchronous rule function given by $\lambda(a,b,x,x)=\delta_{ab}$ and $\lambda(a,b,x,y)=1$ if $x \neq y$. The map $\Phi:C_{ns}(\lambda) \to C_{ns}(\widetilde{\lambda})$ from Theorems \ref{theorem: include synchronous in the bisynchronous} and \ref{theorem: isomorphism gives affine homeomorphisms} would be given by
\[ \Phi(p)((i,v),(j,w)|(a,x),(b,y))=\delta_{vx}\delta_{wy}p(a-i,b-j|x,y).\]
If $\Phi:C_{ns}(\lambda) \to C_{ns}(\widetilde{\lambda})$ were an affine homeomorphism, then the range of $\Phi$ would be a closed face in $C_{ns}^{bs}(4,4)$.

To show that the range of $\Phi$ is not a face in $C_{ns}^{bs}(4,4)$, we define three elements of $C_{ns}(\widetilde{\lambda})$:
\begin{align*}
p((i,v),(j,w)|(a,x),(b,y))&=\begin{cases} \frac{1}{2} \delta_{vx}\delta_{wy}\delta_{a-i,b-j} \text{ if } x=y \\ \frac{1}{4}\delta_{vx}\delta_{wy} \text{ if } x \neq y \end{cases} \\
q((i,v),(j,w)|(a,x),(b,y))&=\frac{1}{2}\delta_{vx}\delta_{wy}\delta_{a-i,b-j} \\
r((i,v),(j,w)|(a,x),(b,y))&=\begin{cases} \frac{1}{2}\delta_{vx}\delta_{wy}\delta_{a-i,b-j} \text{ if } x=y \\ \frac{1}{2}\delta_{vx}\delta_{wy}(1-\delta_{a-i,b-j}) \text{ if } x \neq y \end{cases}
\end{align*}
Each of these elements belong to $C_{ns}(\widetilde{\lambda})$. We note that $p$ belongs to the range of $\Phi$, but $q$ and $r$ do not. Moreover, $p=\frac{1}{2}(q+r)$, so the range of $\Phi$ is not a face in $C_{ns}(\widetilde{\lambda})$, and hence not a face in $C_{ns}^{bs}(4,4)$. Thus, $\Phi$ is not an affine homeomorphism.

\begin{remark}
It is possible to have two bisynchronous games with $*$-isomorphic game algebras, but with one game having no winning non-signalling strategies and the other having winning non-signalling strategies. Indeed, let $\cG_1=\text{Iso}(G,H)$ be a graph isomorphism game between two graphs $G$ and $H$ that are non-signalling isomorphic but not $qc$-isomorphic (see, for example, \cite{AMRSSV19}). Let $\cG_2=\text{Iso}(K,L)$ be a graph isomorphism game between two graphs that are not non-signalling isomorphic. Then $\cA(\cG_1)=(0)$; indeed, if it were non-zero, then $\cG_1$ would have a winning $qc$-strategy \cite{BCEHPSW20}, which it does not. Since non-signalling isomorphism is more general than $qc$-isomorphism, we must have $\cA(\cG_2)=(0)$ as well. However, the set of non-signalling winning strategies is non-empty, convex and closed for $\cG_1$, but the set of non-signalling winning strategies for $\cG_2$ is empty.
\end{remark}

\section{Synchronous games with three outputs}

In this section, we exhibit a $*$-isomorphism between the algebra of a synchronous game $\cG=(I,O,\lambda)$ with $|I|=n$ and $|O|=k>3$ and the algebra of a synchronous game $\widetilde{\cG}=(\widetilde{I},\widetilde{O},\widetilde{\lambda})$ with $|\widetilde{I}|=n(k-2)$ and $|\widetilde{O}|=3$. Due to work of T. Fritz, it is already known that every synchronous game algebra is $*$-isomorphic to a synchronous game algebra with at most $3$ outputs per question. The result comes as a consequence of the fact that every synchronous game algebra is $*$-isomorphic to a hypergraph $*$-algebra, and that every hypergraph $*$-algebra is $*$-isomorphic to a synchronous game algebra with at most $3$ outputs \cite{Fr20}. Our isomorphism here, while no longer immediately related to hypergraph algebras, is more optimal in the sense that the question set size of the final game algebra is smaller. We also show that the isomorphism induces affine homeomorphisms at the level of the sets of winning $t$-strategies, for $t \in \{loc,q,qa,qc\}$.

For winning strategies in a synchronous game, the key data that one must track are that the projections $E_{a,x}$ satisfy $E_{a,x}E_{b,y}=0$ whenever $\lambda(a,b,x,y)=0$, and that $\sum_{a=1}^k E_{a,x}=1$ for all $x$. The condition $\lambda(a,b,x,y)=0$ can easily be preserved in a game algebra with an inflated input set, but a smaller output set. Enforcing the condition that $\sum_{a=1}^k E_{a,x}=1$, however, requires more care when the desired output set is small. With three outputs, one has enough information in the game algebra to keep track of sums of projections, by using orthogonality conditions.

The content of the construction in this section relies heavily on a few lemmas about projections in a unital $*$-algebra. While these facts are all well-known, we include proofs for the reader's convenience.

\begin{lemma}\label{lemma: sum of two projections}
Let $p,q,r$ be projections in a unital $*$-algebra $\cA$. Then $p+q=r$ if and only if the relations
\begin{align}
pq&=0, \label{orthogonal}\\
p(1-r)&=0, \label{p below r}\\
q(1-r)&=0, \label{q below r}\\
(1-(p+q))r&=0 \label{p+q=r}
\end{align}
all hold.
\end{lemma}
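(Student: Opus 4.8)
The plan is to prove both directions of the equivalence by elementary manipulations of projections in a unital $*$-algebra, using only self-adjointness and idempotency of $p,q,r$. The forward direction ($p+q=r$ implies (\ref{orthogonal})--(\ref{p+q=r})) is the easy half: first I would square the identity $r=p+q$ to get $r = r^2 = (p+q)^2 = p + q + pq + qp = r + pq + qp$, hence $pq+qp=0$. Then multiplying $pq+qp=0$ on the left by $p$ gives $pq + pqp = 0$, and on the right by $p$ gives $pqp + qp = 0$; subtracting yields $pq = qp$, and combined with $pq+qp=0$ this forces $pq = qp = 0$, which is (\ref{orthogonal}). Given orthogonality, $pr = p(p+q) = p^2 + pq = p$, so $p(1-r) = p - pr = 0$, which is (\ref{p below r}); symmetrically $q(1-r) = 0$, which is (\ref{q below r}); and $(p+q)r = pr+qr = p+q = r$ gives $(1-(p+q))r = r - r = 0$, which is (\ref{p+q=r}).

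For the converse, I assume (\ref{orthogonal})--(\ref{p+q=r}) and must deduce $p+q=r$. From (\ref{p below r}) and (\ref{q below r}), $pr = p$ and $qr = q$, so $(p+q)r = p+q$. From (\ref{p+q=r}), $r = (p+q)r$. Chaining these, $r = (p+q)r = p+q$. That is the entire argument for the converse — it only uses (\ref{p below r}), (\ref{q below r}), (\ref{p+q=r}), and does not even need (\ref{orthogonal}); I would remark that orthogonality is nonetheless a genuine consequence, so the list of relations is not redundant as a characterization even though only three of the four are needed for the ``if'' direction.

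The only mild subtlety — and the step I'd flag as needing the most care — is the deduction of $pq = qp = 0$ from $pq + qp = 0$ in the forward direction, since in a general (noncommutative, possibly non-$C^*$) unital $*$-algebra one cannot appeal to positivity. The trick above (multiply by $p$ on each side and subtract) is purely algebraic and works in any unital ring, so it is fine; I would just make sure to write it out explicitly rather than wave at it. Everything else is one-line algebra. I would present the proof as two short paragraphs, ``($\Rightarrow$)'' establishing the four relations in the order (\ref{orthogonal}) then (\ref{p below r})--(\ref{p+q=r}), and ``($\Leftarrow$)'' giving the three-step chain $r = (p+q)r = p+q$.
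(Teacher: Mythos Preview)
Your proof is correct and follows essentially the same approach as the paper's: both directions proceed by the same elementary manipulations, obtaining $pq+qp=0$ from idempotency and then deducing $pq=0$, and chaining $(p+q)r=p+q$ with $(p+q)r=r$ for the converse. Your purely ring-theoretic trick for extracting $pq=qp$ from $pq+qp=0$ (multiply on each side by $p$ and subtract) is a minor variant of the paper's, which instead uses self-adjointness of $pqp$; and your observation that relation~(\ref{orthogonal}) is not actually needed for the converse is a nice sharpening that the paper does not make explicit.
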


\begin{proof}
If $p+q=r$, then since $r^2=r$, we have
\[ p+q=(p+q)^2=p+q+pq+qp.\]
It follows that $pq+qp=0$. Multiplying on the right by $p$ gives $qp=-pqp$. As $-pqp$ is self-adjoint, taking adjoints gives $qp=pq$. As $pq+qp=0$, we get $2pq=0$, yielding $pq=0$. Hence, equation (\ref{orthogonal}) holds. Multiplying the equation $p+q=r$ by $(1-r)$ yields $p(1-r)+q(1-r)=0$. Multiplying on the left by $p$ (respectively, $q$), we obtain $p(1-r)=0$ (respectively, $q(1-r)=0$), which are equations (\ref{p below r}) and (\ref{q below r}). Equation (\ref{p+q=r}) is immediately obtained by multiplying the equation $p+q=r$ on the left by $1-(p+q)$, since $p+q$ is a projection.

Conversely, let $p,q,r$ be projections satisfying the relations given. Using equation (\ref{orthogonal}), $(p+q)^2=p^2+pq+qp+q^2=p+q$, since $pq=0$, $p=p^*$ and $q=q^*$. Combining (\ref{p below r}) and (\ref{q below r}), we obtain $(p+q)(1-r)=0$. Thus, $p+q=(p+q)r$. Using (\ref{p+q=r}), we also get $r=(p+q)r$, yielding $p+q=r$.
\end{proof}

As a special case of Lemma \ref{lemma: sum of two projections}, we obtain the following simple lemma.

\begin{lemma}
\label{lemma: two projections are equal}
Let $\cA$ be a unital $*$-algebra, and let $p,r \in \cA$ be self-adjoint idempotents. Then $p=r$ if and only if $p(1-r)=0$ and $r(1-p)=0$.
\end{lemma}

\begin{proof}
This is an immediate application of Lemma \ref{lemma: sum of two projections} with $q=0$.
\end{proof}

The next lemma is well-known, but we include the proof for completeness.

\begin{lemma}
\label{lemma: PVMs with 3 outputs}
Let $\cA$ be a unital $*$-algebra, and let $p,q,r \in \cA$ be self-adjoint idempotents satisfying $p+q+r=1$. Then $pq=pr=qr=0$.
\end{lemma}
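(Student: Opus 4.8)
The plan is to prove the identity $p+q+r=1$ for self-adjoint idempotents forces pairwise orthogonality, by exploiting the self-adjointness together with the idempotent relations, exactly in the spirit of the computation in Lemma \ref{lemma: sum of two projections}. The key algebraic input is that if $a$ is a self-adjoint element with $a = -a^*b$-type relations appearing, one can force $a=0$ after squaring and taking adjoints.

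First I would rewrite the hypothesis as $p+q = 1-r$, and note that $1-r$ is itself a self-adjoint idempotent since $r$ is. Then squaring the equation $p+q = 1-r$ and using that both sides are idempotent gives $p+q+pq+qp = p+q$, hence $pq+qp = 0$. Multiplying this on the right by $p$ yields $pqp + qp = 0$, i.e. $qp = -pqp$; since $p=p^*$ and $q=q^*$, the right-hand side $-pqp$ is self-adjoint, so taking adjoints of $qp = -pqp$ gives $pq = -pqp = qp$. Combined with $pq+qp=0$ this forces $2pq = 0$, hence $pq = 0$, and therefore also $qp = 0$. This is the step I expect to be the only one requiring genuine care — the trick of multiplying by $p$ and using self-adjointness to conclude $pq=qp$ is exactly the move already used in Lemma \ref{lemma: sum of two projections}, so it is not really an obstacle, just the one non-formal manipulation.

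Having established $pq=qp=0$, it follows that $p+q$ is a self-adjoint idempotent with $(p+q)r = (1-r)r = 0$ and likewise $r(p+q)=0$. Multiplying $p+q+r=1$ on the left by $p$ then gives $p + pr = p$ (using $pq=0$), so $pr = 0$; similarly multiplying on the left by $r$ gives $rp + r = r$, so $rp = 0$. The same argument with $q$ in place of $p$ yields $qr = rq = 0$. This disposes of all three pairs. Alternatively, one can simply invoke Lemma \ref{lemma: sum of two projections} directly: applying it to the pair $(p,q)$ with "$r$" taken to be $p+q = 1-r$ already gives $pq=0$, and applying it to the pair $(p+q, r)$ — or to $(p,r)$ and $(q,r)$ after noting $p+q = 1-r$ — yields the remaining orthogonality relations. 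Either route is short; I would present the direct computation since it keeps the lemma self-contained and mirrors the earlier proof.
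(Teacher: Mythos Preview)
Your approach is essentially the paper's own: the paper simply notes that $p+q=1-r$ is a self-adjoint idempotent and invokes Lemma~\ref{lemma: sum of two projections} to get $pq=0$, then says the other pairs follow similarly --- which is precisely the alternative you sketch at the end, and your direct computation is exactly the content of that lemma unpacked. One small slip: the line ``multiplying on the left by $r$ gives $rp+r=r$'' is not right, since $r(p+q+r)=rp+rq+r$; however, this is harmless because $rp=(pr)^*=0$ already follows from the preceding step $pr=0$ by taking adjoints.
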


\begin{proof}
We will show that $pq=0$; similar arguments show that $pr=qr=0$. Since $r^2=r=r^*$, it follows that $(1-r)^2=1-r=(1-r)^*$; thus, $p+q=1-r$ is a self-adjoint idempotent. Lemma \ref{lemma: sum of two projections} yields $pq=0$. 
\end{proof}

\begin{lemma}
\label{lemma: second projection lemma}
Let $\{q_1,q_2,q_3\}$ and $\{r_1,r_2,r_3\}$ be two PVMs in a unital $*$-algebra $\cA$. Then $q_1+q_2=r_1$ if and only if $r_1q_3=0$ and $q_ir_j=0$ for $i=1,2$ and $j=2,3$.
\end{lemma}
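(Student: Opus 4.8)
The plan is to derive everything from Lemma \ref{lemma: sum of two projections}, applied with $(p,q,r)=(q_1,q_2,r_1)$, using freely the orthogonality relations $q_iq_j=0$ and $r_ir_j=0$ for $i\neq j$ supplied by Lemma \ref{lemma: PVMs with 3 outputs}, together with the fact that all six idempotents are self-adjoint.

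For the forward implication, I would assume $q_1+q_2=r_1$ and invoke the ``only if'' direction of Lemma \ref{lemma: sum of two projections}. This immediately gives $q_1(1-r_1)=0$ and $q_2(1-r_1)=0$, that is, $q_i=q_ir_1$ for $i=1,2$; right-multiplying by $r_j$ and using $r_1r_j=0$ for $j\neq 1$ then yields $q_ir_j=0$ for $i\in\{1,2\}$ and $j\in\{2,3\}$. Lemma \ref{lemma: sum of two projections} also gives $(1-(q_1+q_2))r_1=0$, and since $1-(q_1+q_2)=q_3$ this reads $q_3r_1=0$; taking adjoints (everything is self-adjoint) gives $r_1q_3=0$. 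That is the full list of claimed relations.

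For the converse, I would assume $r_1q_3=0$ and $q_ir_j=0$ for $i\in\{1,2\}$, $j\in\{2,3\}$, and verify the four hypotheses of Lemma \ref{lemma: sum of two projections} for $(p,q,r)=(q_1,q_2,r_1)$: relation (\ref{orthogonal}) is $q_1q_2=0$, which holds by Lemma \ref{lemma: PVMs with 3 outputs}; relation (\ref{p below r}) is $q_1(1-r_1)=0$, which follows from $q_1=q_1(r_1+r_2+r_3)=q_1r_1$ using $q_1r_2=q_1r_3=0$; relation (\ref{q below r}) is $q_2(1-r_1)=0$, proved identically; and relation (\ref{p+q=r}) is $(1-(q_1+q_2))r_1=q_3r_1=0$, which is the adjoint of the hypothesis $r_1q_3=0$. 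Lemma \ref{lemma: sum of two projections} then delivers $q_1+q_2=r_1$.

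There is no substantive obstacle here; the only two points that need a moment's care are passing between a vanishing product and its adjoint (legitimate because each $q_i,r_j$ is self-adjoint) and extracting the individual relations $q_ir_j=0$ from $q_i=q_ir_1$ by right-multiplication by the mutually orthogonal $r_j$. Everything else is a direct appeal to Lemmas \ref{lemma: sum of two projections} and \ref{lemma: PVMs with 3 outputs}.
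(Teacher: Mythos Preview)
Your proof is correct and follows essentially the same approach as the paper: both directions are reduced to Lemma~\ref{lemma: sum of two projections} applied to $(p,q,r)=(q_1,q_2,r_1)$, together with the PVM orthogonality from Lemma~\ref{lemma: PVMs with 3 outputs} and self-adjointness to pass between $q_3r_1$ and $r_1q_3$. The only cosmetic difference is that in the forward direction you extract $q_ir_j=0$ by right-multiplying $q_i=q_ir_1$ by $r_j$ and using $r_1r_j=0$, whereas the paper expands $q_i(r_2+r_3)=0$ and right-multiplies by $r_2$ or $r_3$ using $r_2r_3=0$; these are the same trick.
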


\begin{proof}
First, assume that $q_1+q_2=r_1$. Then by Lemma \ref{lemma: sum of two projections}, $q_1(1-r_1)=q_2(1-r_1)=0$ and $(1-(q_1+q_2))r_1=0$. As $1-r_1=r_1+r_2$ and $1-(q_1+q_2)=q_3$, we obtain $r_1q_3=0$ and $q_1(r_2+r_3)=q_2(r_2+r_3)=0$. Expanding the first one gives
\[ q_1r_2+q_1r_3=0.\]
Since $r_1,r_2,r_3$ are self-adjoint idempotents summing to $1$, by Lemma \ref{lemma: PVMs with 3 outputs}, we have $r_2r_3=0$. Thus, multiplying the equation on the right by $r_2$, we obtain $q_1r_2=0$. By multiplying by $r_3$ instead of $r_2$, we obtain $q_1r_3=0$. Similarly, we obtain $q_2r_2=q_2r_3=0$.

Conversely, suppose that $r_1q_3=0$ and $q_ir_j=0$ for $i=1,2$ and $j=2,3$. We note that $q_1q_2=0$ since $\{q_1,q_2,q_3\}$ is a PVM on three outputs. Next, we have $q_1(1-r_1)=q_1(r_2+r_3)=q_1r_2+q_1r_3=0$ by assumption. Similarly, $q_2(1-r_1)=0$. Lastly, $(1-(q_1+q_2))r_1=q_3r_1=(r_1q_3)^*=0$. By Lemma \ref{lemma: sum of two projections}, $q_1+q_2=r_1$.
\end{proof}

Next, we show that we may always replace the rule function $\lambda$ with a symmetric rule function. By a \textbf{symmetric rule function}, we mean a rule function $\lambda: O \times O \times I \times I \to \{0,1\}$ that satisfies $\lambda(a,b,x,y)=\lambda(b,a,y,x)$ for all $a,b,x,y$. For a synchronous non-local game $\cG=(I,O,\lambda)$, the \textbf{symmetrized} game $\cG_{sym}$ is defined with rule function \[\lambda_{sym}(a,b,x,y)=\lambda(a,b,x,y)\lambda(b,a,y,x).\]
(See \cite{HMPS19}.) While this extra restriction of the rule function would make a difference, for example, in the set of winning non-signalling strategies, no such difference is seen at the level of the game algebra.

\begin{proposition}
\label{proposition: symmetric isomorphism}
Let $\cG=(I,O,\lambda)$ be a synchronous non-local game. Then $\cA(\cG)$ is  $*$-isomorphic to $\cA(\cG_{sym})$.
\end{proposition}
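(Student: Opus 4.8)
The plan is to build unital $*$-homomorphisms in both directions between $\cA(\cG)$ and $\cA(\cG_{sym})$ that fix the canonical generators, and then observe that they are mutually inverse. Write $e_{a,x}$ for the canonical generators of $\cA(\cG)$ and $f_{a,x}$ for those of $\cA(\cG_{sym})$. Since $\lambda_{sym}(a,b,x,y)=\lambda(a,b,x,y)\lambda(b,a,y,x)$, we have $\lambda_{sym}^{-1}(\{0\}) \supseteq \lambda^{-1}(\{0\})$, so every defining relation of $\cA(\cG)$ (the idempotent, self-adjointness, PVM, and orthogonality relations) also holds among the $f_{a,x}$. Hence the universal property of $\cA(\cG)$ gives a unital $*$-homomorphism $\rho:\cA(\cG)\to\cA(\cG_{sym})$ with $\rho(e_{a,x})=f_{a,x}$.

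The substantive direction is the reverse map. I want a unital $*$-homomorphism $\pi:\cA(\cG_{sym})\to\cA(\cG)$ with $\pi(f_{a,x})=e_{a,x}$, and by universality of $\cA(\cG_{sym})$ it suffices to check that the $e_{a,x}$ satisfy the defining relations of $\cA(\cG_{sym})$; the only relations not already present in $\cA(\cG)$ are the new orthogonality relations $e_{a,x}e_{b,y}=0$ for those $(a,b,x,y)$ with $\lambda(a,b,x,y)=1$ but $\lambda(b,a,y,x)=0$. So the key step is: \emph{in $\cA(\cG)$, $\lambda(b,a,y,x)=0$ already forces $e_{a,x}e_{b,y}=0$}, i.e. orthogonality of generators in a synchronous game algebra is automatically symmetric. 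This is exactly where I expect the one genuine computation to live. The argument: the defining relation $\lambda(b,a,y,x)=0$ gives $e_{b,y}e_{a,x}=0$ in $\cA(\cG)$. Now use that $\{e_{c,x}\}_{c\in O}$ is a PVM, so $\sum_c e_{c,x}=1$ with $e_{c,x}e_{c',x}=0$ for $c\neq c'$ (the synchronous rule), and similarly $\{e_{d,y}\}_d$ is a PVM. The trick is the standard one that in a $*$-algebra, if $gh=0$ for self-adjoint idempotents then $(hg)^*=g^*h^*=gh=0$ would be the wrong direction, so instead I argue: from $e_{b,y}e_{a,x}=0$ take adjoints to get $e_{a,x}e_{b,y}=0$ directly, using that $e_{a,x}$ and $e_{b,y}$ are self-adjoint. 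Indeed $(e_{b,y}e_{a,x})^* = e_{a,x}^* e_{b,y}^* = e_{a,x}e_{b,y}$, and the left side is $0^*=0$.

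Thus $\pi$ exists. Finally, $\rho$ and $\pi$ are mutual inverses because $\pi\circ\rho$ fixes each generator $e_{a,x}$ of $\cA(\cG)$ and $\rho\circ\pi$ fixes each generator $f_{a,x}$ of $\cA(\cG_{sym})$, and a unital $*$-endomorphism of a $*$-algebra that fixes a generating set is the identity. Hence $\cA(\cG)\simeq\cA(\cG_{sym})$ as unital $*$-algebras. The main (and only) obstacle is the observation that generator orthogonality in a synchronous game algebra is automatically symmetric, which as noted above reduces to taking the adjoint of the relation $e_{b,y}e_{a,x}=0$; everything else is a routine application of the universal property.
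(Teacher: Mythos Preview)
Your proof is correct and essentially identical to the paper's: both directions are obtained from the universal property, and the only nontrivial check---that $\lambda(b,a,y,x)=0$ forces $e_{a,x}e_{b,y}=0$ in $\cA(\cG)$---is handled in the paper exactly as you do, by taking the adjoint of $e_{b,y}e_{a,x}=0$. The only cosmetic difference is that your $\pi$ and $\rho$ are named oppositely to the paper's.
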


\begin{proof}
As $\cG$ and $\cG_{sym}$ have the same input set $I$ and the same output set $O$, we may write $e_{a,x}$ for the canonical generators of $\cA(\cG)$ and $f_{a,x}$ for the canonical generators of $\cG_{sym}$, where $x \in I$ and $a \in O$. If $\lambda(a,b,x,y)=0$, then $\lambda_{sym}(a,b,x,y)=0$ so that $f_{a,x}f_{b,y}=0$. By definition, we have $\sum_{a=1}^k f_{a,x}=1_{\cA(\mathcal{G}_{sym})}$, so the canonical map $e_{a,x} \mapsto f_{a,x}$ extends to a unital $*$-homomorphism $\pi:\cA(\cG)\to \cA(\cG_{sym})$.

Conversely, if $\lambda_{sym}(a,b,x,y)=0$, then either $\lambda(a,b,x,y)=0$ or $\lambda(b,a,y,x)=0$. In the first case, we have $e_{a,x}e_{b,y}=0$ in $\cA(\mathcal{G})$. In the latter case, we have $e_{b,y}e_{a,x}=0$; after taking adjoints, we obtain $e_{a,x}e_{b,y}=0$. Since $\sum_{a=1}^k e_{a,x}=1_{\cA(\cG)}$ for all $x$, the map $f_{a,x} \mapsto e_{a,x}$ extends to a unital $*$-homomorphism $\rho:\cA(\cG_{sym}) \to \cA(\cG)$. Evidently $\pi$ and $\rho$ are mutual inverses, so the game algebras are $*$-isomorphic.
\end{proof}

Next, we construct a synchronous game with three outputs associated to a synchronous game $\mathcal{G}$.

\begin{definition}
Let $\cG=(I,O,\lambda)$ be a symmetric synchronous game on $n$ inputs and $k$ outputs with $k>3$. We define the symmetric three-output game  $\widetilde{\cG}=(\widetilde{I},\widetilde{O},\widetilde{\lambda})$ associated to $\cG$ with $|\widetilde{I}|=n(k-2)$ and $|\widetilde{O}|=3$ as follows. For simplicity, we write $I=\{1,...,n\}$ and $O=\{1,...,k\}$. We define
\[ \widetilde{I}=\{1,...,k-2\} \times \{1,...,n\},\]
and set $\widetilde{O}=\{1,2,3\}$. Define $\mu:\widetilde{O} \times \widetilde{O} \times \widetilde{I} \times \widetilde{I} \to \{0,1\}$ by
\begin{align}
\mu(1,1,(1,x),(1,y))&=\lambda(1,1,x,y) \label{first lambda preserver} \\
\mu(1,2,(1,x),(b,y))&=\lambda(1,b+1,x,y) &1 \leq b \leq k-2 \\
\mu(1,3,(1,x),(k-2,y))&=\lambda(1,k,x,y) \\
\mu(2,2,(a,x),(b,y))&=\lambda(a+1,b+1,x,y) &1 \leq a,b \leq k-2 \\ 
\mu(2,3,(a,x),(k-2,y))&=\lambda(a+1,k,x,y) &1 \leq a \leq k-2 \label{final lambda preserver} \\ 
\mu(3,3,(k-2,x),(k-2,y))&=\lambda(k,k,x,y) \label{the actual final lambda preserver} \\
\mu(1,3,(a+1,x),(a,x))&=0 &1 \leq a \leq k-3 \label{r_a below p_a+p_{a+1}}\\
\mu(i,j,(a,x),(a+1,x))&=0 &i=1,2, \, j=2,3, \, 1 \leq a \leq k-3, \label{p_a+p_{a+1} below r_a}
\end{align}
and $\mu(i,j,(a,x),(b,y))=1$ otherwise, and set $\widetilde{\lambda}=\mu_{sym}$. 
\end{definition}

Throughout this section, we will write $f_{i,(a,x)}$ for the generators of the game algebra for $\widetilde{\cG}$, where $i=1,2,3$, $1 \leq a \leq k-2$ and $1 \leq x \leq n$. We define $p_{1,x}=f_{1,(1,x)}$, $p_{a,x}=f_{2,(a-1,x)}$ for $2 \leq a \leq k-1$, and $p_{k,x}=f_{3,(k-2,x)}$. We also set $r_{a,x}=f_{1,(a+1,x)}$ for all $1 \leq a \leq k-3$.

\begin{lemma}\label{lemma: p_{a,x} is a PVM}
For each $x$ and for each $1 \leq a \leq k-2$, we have $f_{1,(a,x)}+f_{2,(a,x)}=f_{1,(a+1,x)}$. In particular, $\sum_{a=1}^k p_{a,x}=1_{\mathcal{A}(\widetilde{\cG})}$.
\end{lemma}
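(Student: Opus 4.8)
The plan is to establish the single relation $f_{1,(a,x)} + f_{2,(a,x)} = f_{1,(a+1,x)}$ for each fixed $x$ and each $1 \leq a \leq k-3$ (the range where $f_{1,(a+1,x)}$ makes sense), using Lemma~\ref{lemma: second projection lemma}, and then telescope to obtain the claim about $\sum_{a=1}^k p_{a,x} = 1$.

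First I would fix $x \in I$ and $a$ with $1 \leq a \leq k-3$. The three generators $\{f_{1,(a,x)}, f_{2,(a,x)}, f_{3,(a,x)}\}$ form a PVM (three outputs for question $(a,x)$), and likewise $\{f_{1,(a+1,x)}, f_{2,(a+1,x)}, f_{3,(a+1,x)}\}$ is a PVM for question $(a+1,x)$. To apply Lemma~\ref{lemma: second projection lemma} with $q_i = f_{i,(a,x)}$ and $r_j = f_{j,(a+1,x)}$, I need to verify: (i) $r_1 q_3 = 0$, i.e. $f_{1,(a+1,x)} f_{3,(a,x)} = 0$; and (ii) $q_i r_j = 0$ for $i = 1,2$ and $j = 2,3$, i.e. $f_{i,(a,x)} f_{j,(a+1,x)} = 0$. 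Condition (i) follows directly from rule~(\ref{r_a below p_a+p_{a+1}}): $\mu(1,3,(a+1,x),(a,x)) = 0$ gives $f_{1,(a+1,x)} f_{3,(a,x)} = 0$ in $\cA(\widetilde\cG)$ (after passing to $\widetilde\lambda = \mu_{sym}$, which only adds zeros). Condition (ii) follows from rule~(\ref{p_a+p_{a+1} below r_a}): $\mu(i,j,(a,x),(a+1,x)) = 0$ for $i=1,2$, $j=2,3$, and $1 \leq a \leq k-3$, giving exactly the four required orthogonality relations. Hence Lemma~\ref{lemma: second projection lemma} yields $f_{1,(a,x)} + f_{2,(a,x)} = f_{1,(a+1,x)}$.

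Next I would translate this into the $p$-notation and telescope. Recall $p_{1,x} = f_{1,(1,x)}$, $p_{a,x} = f_{2,(a-1,x)}$ for $2 \leq a \leq k-1$, and $p_{k,x} = f_{3,(k-2,x)}$. The relation just proved says $f_{1,(a+1,x)} = f_{1,(a,x)} + f_{2,(a,x)}$, so by induction on $a$ starting from $f_{1,(1,x)} = p_{1,x}$ we get
\[ f_{1,(a+1,x)} = \sum_{j=1}^{a} f_{2,(j,x)} + f_{1,(1,x)} = p_{1,x} + \sum_{j=1}^{a} p_{j+1,x} = \sum_{j=1}^{a+1} p_{j,x} \]
for $0 \leq a \leq k-3$, i.e. $f_{1,(m,x)} = \sum_{j=1}^{m} p_{j,x}$ for $1 \leq m \leq k-2$. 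Taking $m = k-2$ gives $f_{1,(k-2,x)} = \sum_{j=1}^{k-2} p_{j,x}$. Since $\{f_{1,(k-2,x)}, f_{2,(k-2,x)}, f_{3,(k-2,x)}\}$ is a PVM and $f_{2,(k-2,x)} = p_{k-1,x}$, $f_{3,(k-2,x)} = p_{k,x}$, summing gives $\sum_{j=1}^{k} p_{j,x} = f_{1,(k-2,x)} + f_{2,(k-2,x)} + f_{3,(k-2,x)} = 1_{\cA(\widetilde\cG)}$.

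The main obstacle is bookkeeping rather than any genuine difficulty: one must be careful that the rules~(\ref{r_a below p_a+p_{a+1}}) and (\ref{p_a+p_{a+1} below r_a}) are quantified over the correct range of $a$ so that the telescoping chain connects $f_{1,(1,x)}$ all the way to $f_{1,(k-2,x)}$ without a gap, and that the indices $(a,x)$ versus $(a+1,x)$ in the rule function line up with the roles of $q_i$ and $r_j$ in Lemma~\ref{lemma: second projection lemma}. I should also note at the outset that replacing $\mu$ by $\mu_{sym}$ only enlarges the zero set of the rule function, so every orthogonality relation coming from a rule $\mu(\cdots) = 0$ still holds in $\cA(\widetilde\cG)$; this justifies reading the relations~(\ref{r_a below p_a+p_{a+1}})--(\ref{p_a+p_{a+1} below r_a}) directly off the definition. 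No step requires more than invoking Lemma~\ref{lemma: second projection lemma} and a finite induction.
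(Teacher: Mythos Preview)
Your proposal is correct and follows essentially the same approach as the paper: both arguments verify the hypotheses of Lemma~\ref{lemma: second projection lemma} from rules~(\ref{r_a below p_a+p_{a+1}}) and~(\ref{p_a+p_{a+1} below r_a}), deduce $f_{1,(a,x)}+f_{2,(a,x)}=f_{1,(a+1,x)}$, and then telescope using the PVM at question $(k-2,x)$. Your explicit remarks that the relation is only meaningful for $1\le a\le k-3$ and that passing from $\mu$ to $\mu_{sym}$ preserves the needed orthogonalities are careful points the paper leaves implicit.
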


\begin{proof}
We first note that $f_{1,(a+1,x)}f_{3,(a,x)}=0$ by Rule (\ref{r_a below p_a+p_{a+1}}). Similarly, using Rule (\ref{p_a+p_{a+1} below r_a}), we have
\[f_{i,(a,x)}f_{j,(a+1,x)}=0, \, i=1,2, \, j=2,3, \, \, 1 \leq a \leq k-3.\]
By Lemma \ref{lemma: second projection lemma}, it follows that $f_{1,(a+1,x)}=f_{1,(a,x)}+f_{2,(a,x)}$.

For the last statement, since $p_{1,x}=f_{1,(1,x)}$ and $p_{a,x}=f_{2,(a-1,x)}$ for $2 \leq a \leq k-1$ and $p_{k,x}=f_{3,(k-2,x)}$, we see that
\[ p_{1,x}+p_{2,x}=f_{1,(1,x)}+f_{2,(1,x)}=f_{1,(2,x)}=r_{1,x}.\]
Similarly, for each $3 \leq a \leq k-2$, \[p_{a,x}+r_{a-2,x}=f_{2,(a-1,x)}+f_{1,(a-1,x)}=f_{1,(a,x)}=r_{a-1,x}.\] It follows that $r_{k-3}=\sum_{a=1}^{k-2} p_{a,x}$. Using the fact that $\{r_{k-3},p_{k-1},p_k\}$ is a PVM in $\cA(\widetilde{\cG})$, it follows that $\sum_{a=1}^k p_{a,x}=1_{\cA(\widetilde{\cG})}$.
\end{proof}

\begin{theorem}
\label{theorem: equivalence to 3 output game}
If $\cG=(I,O,\lambda)$ is a synchronous game on $n$ inputs and $k$ outputs with $k>3$, then there is a symmetric synchronous game $\widetilde{\cG}$ on $n(k-2)$ inputs and $3$ outputs such that $\cA(\cG)$ and $\cA(\widetilde{\cG})$ are $*$-isomorphic.
\end{theorem}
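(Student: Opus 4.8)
The plan is to mirror the structure of the proof of Theorem~\ref{theorem: bisynchronous equivalence}: I will first use the universal property of $\cA(\widetilde{\cG})$ to produce a unital $*$-homomorphism $\pi:\cA(\widetilde{\cG})\to\cA(\cG)$, then produce a unital $*$-homomorphism $\rho:\cA(\cG)\to\cA(\widetilde{\cG})$, and finally check that $\pi$ and $\rho$ are mutual inverses on generators. By Proposition~\ref{proposition: symmetric isomorphism} we may assume at the outset that $\cG$ is symmetric, since $\cA(\cG)\simeq\cA(\cG_{sym})$ and $\widetilde{\cG}$ is built from a symmetric game anyway; this lets us work with $\mu$ and only symmetrize at the very end without changing the algebra.

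For $\rho:\cA(\cG)\to\cA(\widetilde{\cG})$: the natural candidate sends the canonical generator $e_{a,x}$ of $\cA(\cG)$ to the element $p_{a,x}\in\cA(\widetilde{\cG})$ defined just before Lemma~\ref{lemma: p_{a,x} is a PVM}, i.e. $p_{1,x}=f_{1,(1,x)}$, $p_{a,x}=f_{2,(a-1,x)}$ for $2\le a\le k-1$, and $p_{k,x}=f_{3,(k-2,x)}$. Each $p_{a,x}$ is a self-adjoint idempotent, and Lemma~\ref{lemma: p_{a,x} is a PVM} already gives $\sum_{a=1}^{k}p_{a,x}=1$. The remaining point is that $\lambda(a,b,x,y)=0$ forces $p_{a,x}p_{b,y}=0$: here one consults Rules~(\ref{first lambda preserver})--(\ref{the actual final lambda preserver}), which were designed precisely so that $\mu$ (hence $\widetilde\lambda=\mu_{sym}$) kills the product of the appropriate generators whenever $\lambda$ vanishes on the corresponding shifted indices; this is a finite case check over which of the three values $1,2,3$ the indices $a$ and $b$ fall into. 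By the universal property of $\cA(\cG)$, $\rho$ extends to a unital $*$-homomorphism.

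For $\pi:\cA(\widetilde{\cG})\to\cA(\cG)$: I would send $f_{1,(1,x)}\mapsto e_{1,x}$, $f_{2,(a,x)}\mapsto e_{a+1,x}$ for $1\le a\le k-2$, $f_{3,(k-2,x)}\mapsto e_{k,x}$, and $f_{1,(a+1,x)}\mapsto e_{1,x}+e_{2,x}+\cdots+e_{a+1,x}$ for $1\le a\le k-3$ (i.e. a partial sum of the PVM in $\cA(\cG)$, consistent with the relation $f_{1,(a+1,x)}=f_{1,(a,x)}+f_{2,(a,x)}$ that must hold by Lemma~\ref{lemma: p_{a,x} is a PVM}). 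One checks each of the three defining relations of $\cA(\widetilde{\cG})$ under $\pi$: the images are self-adjoint idempotents (partial sums of orthogonal projections), the three-output sums $f_{1,(a,x)}+f_{2,(a,x)}+f_{3,(a,x)}\mapsto 1$ work out using $\sum_{b}e_{b,x}=1$, and the orthogonality relations $\widetilde\lambda=0\Rightarrow(\text{product})=0$ follow because every rule imposing a zero --- whether a $\lambda$-preserving rule (\ref{first lambda preserver})--(\ref{the actual final lambda preserver}), the partial-sum rules (\ref{r_a below p_a+p_{a+1}})--(\ref{p_a+p_{a+1} below r_a}), or a synchronous rule --- maps to a genuine identity in $\cA(\cG)$ (using $e_{a,x}e_{b,x}=0$ for $a\neq b$ and $e_{a,x}e_{b,y}=0$ when $\lambda(a,b,x,y)=0$). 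Since $\widetilde{\lambda}=\mu_{sym}$, one also uses the adjoint trick as in Proposition~\ref{proposition: symmetric isomorphism} to handle the symmetrized rules. By universality, $\pi$ extends to a unital $*$-homomorphism.

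Finally, $\pi\circ\rho=\id_{\cA(\cG)}$ is immediate on generators: $e_{a,x}\mapsto p_{a,x}\mapsto e_{a,x}$ in each of the three index ranges. For $\rho\circ\pi=\id_{\cA(\widetilde{\cG})}$ one checks on the four families of generators; the only non-obvious case is $f_{1,(a+1,x)}$, $1\le a\le k-3$, where $\rho\circ\pi(f_{1,(a+1,x)})=\sum_{b=1}^{a+1}p_{b,x}$, and this equals $f_{1,(a+1,x)}$ by the telescoping computation in the proof of Lemma~\ref{lemma: p_{a,x} is a PVM} (namely $p_{1,x}+p_{2,x}=f_{1,(2,x)}$ and $p_{b,x}+r_{b-2,x}=r_{b-1,x}=f_{1,(b,x)}$). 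Hence $\pi$ and $\rho$ are mutual inverses and $\cA(\cG)\simeq\cA(\widetilde{\cG})$. I expect the main obstacle to be purely bookkeeping: verifying that the explicitly listed rules (\ref{first lambda preserver})--(\ref{p_a+p_{a+1} below r_a}), together with the default value $1$ and the symmetrization, are exactly enough --- neither too many (which would collapse the algebra) nor too few --- to make both $\pi$ and $\rho$ well-defined, which amounts to a careful case analysis on where the three-valued output indices sit relative to the $k$ original outputs.
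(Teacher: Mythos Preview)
Your proposal is correct and follows essentially the same argument as the paper, with the only cosmetic difference that your maps $\pi$ and $\rho$ are named oppositely to the paper's (the paper calls the map $\cA(\cG)\to\cA(\widetilde{\cG})$ by $\pi$ and the reverse by $\rho$). Your explicit assignment $f_{1,(a+1,x)}\mapsto e_{1,x}+\cdots+e_{a+1,x}$ is exactly the paper's $q_{1,(a,x)}=s_{a-1,x}$, and your invocation of Proposition~\ref{proposition: symmetric isomorphism} to reduce to the symmetric case is a point the paper leaves implicit in its definition of $\widetilde{\cG}$.
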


\begin{proof}
We let $e_{a,x}$ be the canonical generators of the game algebra $\cA(\cG)$, and we let $f_{i,(a,x)}$ be the canonical generators of $\cA(\widetilde{\cG})$. As before, we define $p_{1,x}=f_{1,(1,x)}$, $p_{a,x}=f_{2,(a-1,x)}$ for $2 \leq a \leq k-1$ and $p_{k,x}=f_{3,(k-2,x)}$. Notice that rules (\ref{first lambda preserver})--(\ref{the actual final lambda preserver}) show that $p_{a,x}p_{b,y}=0$ whenever $\lambda(a,b,x,y)=0$. By Lemma \ref{lemma: p_{a,x} is a PVM}, we have $\sum_{a=1}^k p_{a,x}=1$ for all $x$. By the universal property of $\cA(\cG)$, there is a unital $*$-homomorphism $\pi:\cA(\cG) \to \cA(\widetilde{\cG})$ such that $\pi(e_{a,x})=p_{a,x}$ for all $1 \leq x \leq n$ and $1 \leq a \leq k$.

In $\cA(\cG)$, for convenience, we define $s_{1,x}=e_{1,x}+e_{2,x}$, and recursively define $s_{a,x}=e_{a+1,x}+s_{a-1,x}$ for all $2 \leq a \leq k-3$. We define projections $q_{i,(a,x)}$ for $1 \leq a \leq k-2$, $1 \leq x \leq n$ and $i=1,2,3$ as follows. For each $x$, we set
\begin{align}
q_{1,(1,x)}&=e_{1,x} \\
q_{2,(a,x)}&=e_{a+1,x} & 1 \leq a \leq k-2 \\
q_{1,(a,x)}&=s_{a-1,x} & 2 \leq a \leq k-2 \\
q_{3,(a,x)}&=1-s_{a,x} & 1 \leq a \leq k-3 \\
q_{3,(k-2,x)}&=e_{k,x}.
\end{align}
By construction of each $s_{a-1,x}$ and the fact that each set $\{e_{a,x}\}_{a=1}^k$ is a set of mutually orthogonal projections summing to $1$, it is evident that $\{q_{1,(a,x)},q_{2,(a,x)},q_{3,(a,x)}\}$ is a set of mutually orthogonal projections in $\cA(\cG)$ summing to $1$, for each $1 \leq a \leq k-2$ and $1 \leq x \leq n$. Now, we show that these projections satisfy the rules of $\widetilde{\cG}$.

First, $q_{1,(1,x)}q_{1,(1,y)}=e_{1,x}e_{1,y}$, while $q_{1,(1,x)}q_{2,(b,y)}=e_{1,x}e_{b+1,y}$, $q_{1,(1,x)}q_{3,(k-2,y)}=e_{1,x}e_{k,y}$, $q_{2,(a,x)}q_{2,(b,y)}=e_{a+1,x}e_{b+1,y}$, and $q_{2,(a,x)}q_{3,(k-2,y)}=e_{a+1,x}e_{k,y}$ and $q_{3,(k-2,x)}q_{3,(k-2,y)}=e_{k,x}e_{k,y}$. It immediately follows that rules (\ref{first lambda preserver})--(\ref{the actual final lambda preserver}) are satisfied by the projections $q_{i,(a,x)}$, since $e_{a,x}e_{b,y}=0$ whenever $\lambda(a,b,x,y)=0$. Next, we observe that, by definition of $s_{a,x}$, we have $q_{1,(2,x)}=s_{1,x}=e_{1,x}+e_{2,x}=q_{1,(1,x)}+q_{2,(1,x)}$ and, for $2 \leq a \leq k-2$, $q_{1,(a+1,x)}=s_{a,x}=s_{a-1,x}+e_{a+1,x}=q_{1,(a,x)}+q_{2,(a,x)}$. Thus, for all $1 \leq a \leq k-2$,
\[ q_{1,(a+1,x)}=q_{1,(a,x)}+q_{2,(a,x)}.\]
By Lemma \ref{lemma: second projection lemma}, it follows that $q_{1,(a+1,x)}q_{3,(a,x)}=0$ and $q_{i,(a,x)}q_{j,(a+1,x)}=0$ for $i=1,2$ and $j=2,3$ and $1 \leq a \leq k-3$. Thus, Rules (\ref{r_a below p_a+p_{a+1}}) and (\ref{p_a+p_{a+1} below r_a}) are satisfied.

By the universal property of $\cA(\widetilde{\cG})$, there is a unital $*$-homomorphism $\rho:\cA(\widetilde{\cG}) \to \cA(\cG)$ such that $\rho(f_{i,(a,x)})=q_{i,(a,x)}$ for all $i \in \widetilde{O}$ and $(a,x) \in \widetilde{I}$. We observe that $\rho(\pi(e_{1,x}))=\rho(f_{1,(1,x)})=q_{1,(1,x)}=e_{1,x}$, while
\[ \rho(\pi(e_{a,x}))=\rho(p_{a,x})=\rho(f_{2,(a-1,x)})=q_{2,(a-1,x)}=e_{a,x}, \, 2 \leq a \leq k-1\]
and $\rho(\pi(e_{k,x}))=\rho(p_{k,x})=\rho(f_{3,(k-2,x)})=q_{3,(k-2,x)}=e_{k,x}$. Thus, $\rho \circ \pi=\id_{\cA(\cG)}$. Similarly, one can check that $\pi \circ \rho=\id_{\cA(\widetilde{\cG})}$, so $\cA(\cG)$ and $\cA(\widetilde{\cG})$ are $*$-isomorphic.
\end{proof}

\begin{remark}
By \cite[Theorem~12.11]{JNVWY20}, there exists a synchronous non-local game $\cG=(I,O,\lambda)$ with $|I|=n$ and $|O|=k$ for some very large $n,k$, that has a winning $qc$ strategy, but no winning $qa$ strategy. By Theorem \ref{theorem: equivalence to 3 output game}, there is a synchronous non-local game $\widetilde{\cG}$ with $n(k-2)$ questions and $3$ answers that has a winning $qc$ strategy, but no winning $qa$-strategy.
\end{remark}

Applying Theorem \ref{theorem: equivalence to 3 output game} to the $4$-coloring game for the complete graph $K_5$ yields the following corollary. 

\begin{corollary}
There exists a synchronous non-local game with $10$ inputs and $3$ outputs that has non-zero game algebra, but no winning $qc$-strategy.
\end{corollary}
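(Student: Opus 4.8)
The plan is to apply Theorem \ref{theorem: equivalence to 3 output game} to the $4$-coloring game of $K_5$, which is exactly the homomorphism game $\cG=\text{Hom}(K_5,K_4)$: here the input set is the vertex set of $K_5$, so $n=|I|=5$, and the output set is the $4$-element color set, so $k=|O|=4$. As recalled in Section \S2 (from \cite{HMPS19}), this game satisfies $\cA(\text{Hom}(K_5,K_4)) \neq (0)$ --- in fact it has no winning hereditary strategy --- and in particular it has no winning $qc$-strategy.

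Since $k=4>3$, Theorem \ref{theorem: equivalence to 3 output game} provides a symmetric synchronous game $\widetilde{\cG}$ on $n(k-2)=5\cdot 2=10$ inputs and $3$ outputs for which $\cA(\cG)$ and $\cA(\widetilde{\cG})$ are $*$-isomorphic. The first assertion is then immediate: $\cA(\widetilde{\cG}) \simeq \cA(\text{Hom}(K_5,K_4)) \neq (0)$, so $\widetilde{\cG}$ has a non-zero game algebra.

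For the second assertion I would invoke the $*$-equivalence principle recorded in Section \S2: a $*$-isomorphism of game algebras makes the two games $*$-equivalent, and $*$-equivalent synchronous games have winning $qc$-strategies simultaneously. Explicitly, if $\widetilde{\cG}$ had a winning $qc$-strategy, then by the characterization of winning $qc$-strategies there would be a unital $*$-homomorphism from $\cA(\widetilde{\cG})$ into some unital tracial $C^*$-algebra $(\cA,\tau)$; precomposing with the $*$-isomorphism $\cA(\text{Hom}(K_5,K_4)) \to \cA(\widetilde{\cG})$ would yield such a homomorphism out of $\cA(\text{Hom}(K_5,K_4))$, i.e.\ a winning $qc$-strategy for $\text{Hom}(K_5,K_4)$, a contradiction. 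Hence $\widetilde{\cG}$ has no winning $qc$-strategy. There is no real obstacle here; the only points to verify are that the hypothesis $k>3$ of Theorem \ref{theorem: equivalence to 3 output game} holds and that the resulting input set $\{1,\dots,k-2\}\times\{1,\dots,n\}$ has size $n(k-2)=10$, both of which are clear.
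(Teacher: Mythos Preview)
Your proposal is correct and follows exactly the same approach as the paper: apply Theorem \ref{theorem: equivalence to 3 output game} to $\text{Hom}(K_5,K_4)$ (with $n=5$, $k=4$, yielding $n(k-2)=10$ inputs and $3$ outputs) and transfer both the non-triviality of the game algebra and the absence of a winning $qc$-strategy across the $*$-isomorphism. Your write-up merely unpacks the transfer of the $qc$ property in slightly more detail than the paper does.
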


\begin{proof}
We recall that the $4$-coloring game for $K_5$, denoted $\text{Hom}(K_5,K_4)$, has $5$ inputs and $4$ outputs, and has non-zero game algebra but no winning $qc$-strategy \cite{HMPS19}. By the construction in Theorem \ref{theorem: equivalence to 3 output game}, we obtain a synchronous game $\cG$ with $(5)(4-2)=10$ inputs and $3$ outputs with $\cA(\cG)\neq(0)$ but such that $\cG$ has no winning $qc$-strategy.
\end{proof}

We close by showing that, if one inflates the input set for a synchronous game $\cG$, then one can arrive at a game $\widetilde{\cG}$ with $3$ outputs whose game algebra relations are given by a certain set of projections being zero, and certain pairs of projections being equal, in addition to the projection-valued measure relations. Moreover, the new game algebra $\cA(\widetilde{\cG})$ will still be $*$-isomorphic to $\cA(\cG)$. In this way, we can remove orthogonality relations from the set of relations defining the game algebra; moreover, this transformation yields an affine homeomorphism at the level of winning strategies in the models $loc,q,qa,qc$.

\begin{theorem}
\label{theorem: zero set and relation set}
Let $\cG=(I,O,\lambda)$ be a synchronous game. Then $\cA(\cG)$ is $*$-isomorphic to a game algebra $\cA(\widetilde{\cG})$ that is the unital $*$-algebra generated by self-adjoint idempotent entries $e_{a,z}$, $z \in \widetilde{I}$, $a \in O$, modulo an ideal of the form $\cJ$ generated by elements of the form $e_{c,z}$ for all pairs $(c,z)$ belonging to a certain set $Z$, and elements of the form $e_{c,z}-e_{d,w}$ for all tuples $(c,z,d,w)$ belonging to a certain set $R$.
\end{theorem}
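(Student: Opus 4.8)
The plan is to reduce to the symmetric three-output setting and then trade every orthogonality constraint for an auxiliary ``gadget'' question. By Proposition \ref{proposition: symmetric isomorphism} I may assume $\cG$ is symmetric, and I may assume it has three outputs, passing from a larger output set via Theorem \ref{theorem: equivalence to 3 output game} (and from two outputs by the trivial padding that adds a forced-zero output to each question). Write $\cG'=(I',\{1,2,3\},\lambda')$ for this game, with canonical generators $f_{i,z}$. Two structural facts drive the construction. First, by Lemma \ref{lemma: PVMs with 3 outputs}, in a three-output game algebra the within-question orthogonality relations $f_{i,z}f_{j,z}=0$ ($i\neq j$) are already implied by the idempotent and partition-of-unity relations. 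Second, by Lemma \ref{lemma: two projections are equal}, combined with Lemma \ref{lemma: PVMs with 3 outputs} to split $1-p$ and $1-r$ into sums of the remaining generators, an equality $p=r$ between two generators is equivalent to a short, fixed list of orthogonality relations among $p$, $r$, and those remaining generators. Consequently ``declaring one output of a fresh question equal to an existing generator'' is a legitimate rule-function condition, and so is ``declaring one output of a fresh question to be $0$'', via the diagonal relation $\widetilde\lambda(c,c,z,z)=0$ --- which forces $e_{c,z}=e_{c,z}^{2}=0$ and is compatible with the synchronous condition, since that condition only constrains $\widetilde\lambda(a,b,z,z)$ for $a\neq b$.

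I would then build $\widetilde\cG=(\widetilde I,\{1,2,3\},\widetilde\lambda)$ with $\widetilde I$ equal to $I'$ together with one fresh three-output question for each relation of $\cG'$. By Lemma \ref{lemma: PVMs with 3 outputs} the only relations of $\cG'$ beyond idempotency and partition of unity are cross-question orthogonality relations $f_{c,z}f_{d,w}=0$ (including any diagonal ones forcing a generator to vanish) and, when Theorem \ref{theorem: equivalence to 3 output game} was used, the chain relations $f_{1,(a+1,x)}=f_{1,(a,x)}+f_{2,(a,x)}$ that those orthogonality relations encode via Lemma \ref{lemma: second projection lemma}. For each orthogonality relation $f_{c,z}f_{d,w}=0$ I add a fresh question $A$ whose first two outputs are tied, in the sense of Lemma \ref{lemma: two projections are equal}, to $f_{c,z}$ and $f_{d,w}$, the third output free; its partition-of-unity relation then recovers $f_{c,z}f_{d,w}=0$ by Lemma \ref{lemma: PVMs with 3 outputs}. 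For each chain relation I add a fresh question $B$ whose first two outputs are tied to $f_{1,(a+1,x)}$ and $f_{3,(a,x)}$ and whose third output is declared $0$; its partition-of-unity relation then forces $f_{1,(a+1,x)}+f_{3,(a,x)}=1$, i.e. $f_{1,(a+1,x)}=1-f_{3,(a,x)}=f_{1,(a,x)}+f_{2,(a,x)}$. On $I'\times I'$ and on pairs of fresh questions I take $\widetilde\lambda$ to be the trivial (synchronous-only) rule function, then symmetrize. Thus every relation of $\cG'$ holds in $\cA(\widetilde\cG)$.

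Two verifications remain. For $\cA(\widetilde\cG)\cong\cA(\cG')$ I would write down mutually inverse unital $*$-homomorphisms: the map $\cA(\cG')\to\cA(\widetilde\cG)$ fixing each $f_{i,z}$ is well defined since all relations of $\cG'$ hold in $\cA(\widetilde\cG)$, and the map $\cA(\widetilde\cG)\to\cA(\cG')$ fixing each $f_{i,z}$ and sending each gadget generator to the element it is meant to represent ($g_1\mapsto f_{c,z}$, $g_2\mapsto f_{d,w}$, and $g_3\mapsto 1-f_{c,z}-f_{d,w}$ or $g_3\mapsto 0$) is well defined because in $\cA(\cG')$ these are genuine self-adjoint idempotents forming PVMs and satisfying the gadget rule relations --- using $f_{c,z}f_{d,w}=0$ in $\cA(\cG')$ and Lemmas \ref{lemma: two projections are equal} and \ref{lemma: second projection lemma}; the two maps invert each other on generators. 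For the asserted presentation I would check that, modulo the idempotent and partition-of-unity relations, the two-sided $*$-ideal generated by the orthogonality relations of $\widetilde\lambda$ equals the ideal generated by the set $Z$ of third outputs of the $B$-gadgets and the set $R=\{\,g_1-f_{c,z},\ g_2-f_{d,w}:\text{gadgets}\,\}$. One inclusion is the ``if'' directions of Lemmas \ref{lemma: two projections are equal} and \ref{lemma: PVMs with 3 outputs}; the reverse uses $e_{c,z}=e_{c,z}^{2}$ for $e_{c,z}\in Z$ and the identity $p-r=p(1-r)-(1-p)r$ for each difference in $R$ to express the $Z$- and $R$-generators in terms of the relevant orthogonality relations. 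This presents $\cA(\widetilde\cG)$ in the required form.

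The main obstacle will not be any single deduction but the combinatorial bookkeeping behind $\widetilde\lambda$: since a fixed output $f_{c,z}$ can occur in many orthogonality constraints, each getting its own gadget, I must check that the gadget relations never clash, that the ``complementary'' outputs used to encode one equality via Lemma \ref{lemma: two projections are equal} are not over-determined by the encoding of another, and that after symmetrizing $\widetilde\lambda$ is still a genuine $\{0,1\}$-valued rule function obeying the synchronous condition. The other point requiring care is being precise about which ideal is being compared in the last step --- the ideal in the free $*$-algebra on the idempotent generators, taken modulo the partition-of-unity relations --- so that the two generating sets can honestly be matched; once that is fixed, the comparison is routine.
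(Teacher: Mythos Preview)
Your approach is correct and is essentially the paper's: reduce to three outputs, then for each zero of the rule function add a fresh three-output question whose first two outputs are tied (via Lemma~\ref{lemma: two projections are equal}) to the two generators in question, so that the PVM relation on the gadget recovers the orthogonality by Lemma~\ref{lemma: PVMs with 3 outputs}. One simplification: your $B$-gadgets are unnecessary, since after Theorem~\ref{theorem: equivalence to 3 output game} the game $\cG'$ is an honest synchronous game whose algebra is defined \emph{only} by the orthogonality relations in $(\lambda')^{-1}(\{0\})$---the chain identities $f_{1,(a+1,x)}=f_{1,(a,x)}+f_{2,(a,x)}$ are consequences of those orthogonality relations (this is Lemma~\ref{lemma: p_{a,x} is a PVM}), not additional defining relations---so the $A$-gadgets for all of $(\lambda')^{-1}(\{0\})$ already suffice, which is exactly what the paper does (with $Z$ allowed to be empty).
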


\begin{proof}
Let $\cG=(I,O,\lambda)$ be a synchronous game. By Theorem \ref{theorem: equivalence to 3 output game}, we may assume that $|O|=3$. We define $\widetilde{\cG}=(I \dot{\cup} \lambda^{-1}(\{0\}),O,\widetilde{\lambda})$, where $\dot{\cup}$ denotes a disjoint union. For simplicity, we denote elements of $\lambda^{-1}(\{0\})$ by $4$-tuples $(c,d,z,w)$. We define $\widetilde{\lambda}$ by
\begin{align}
\widetilde{\lambda}(a,b,x,x)&=0 \text{ if } x,y \in I \text{ and } a \neq b \\
\widetilde{\lambda}(i,j,(c,d,z,w),(c,d,z,w))&=0 \text{ if } i \neq j \\
\widetilde{\lambda}(a,i,x,(a,b,x,y))&=0 \text{ if } i=2,3 \label{e_{a,x} below f_{1,(a,b,x,y)}}\\
\widetilde{\lambda}(j,1,x,(a,b,x,y))&=0 \text{ if } j \neq a \label{1-e_{a,x} below 1-f_{1,(a,b,x,y)}}\\
\widetilde{\lambda}(b,i,y,(a,b,x,y))&=0 \text{ if } i=1,3 \label{e_{b,y} below f_{2,(a,b,x,y)}} \\
\widetilde{\lambda}(j,2,y,(a,b,x,y))&=0 \text{ if } j \neq b. \label{1-e_{b,y} below 1-f_{2,(a,b,x,y)}}
\end{align}
We write the generators of $\cA(\cG)$ as $e_{a,x}$ for $x \in I$ and $a \in O$. In $\cA(\widetilde{\cG})$, we write $f_{i,(a,b,x,y)}$ as the generators where $i \in \{1,2,3\}$ and $(a,b,x,y) \in \lambda^{-1}(\{0\})$, and $g_{a,x}$ as the generators for $x \in I$ and $a \in O$. Fix an element $(a,b,x,y)$ of $\lambda^{-1}(\{0\})$. Using Rule (\ref{e_{a,x} below f_{1,(a,b,x,y)}}), it follows that $g_{a,x}f_{i,(a,b,x,y)}=0$ for $i=2,3$, so that $g_{a,x}(1-f_{1,(a,b,x,y)})=0$. On the other hand, if $j \neq a$ in $\{1,2,3\}$, then by Rule (\ref{1-e_{a,x} below 1-f_{1,(a,b,x,y)}}), we have $g_{j,x}f_{1,(a,b,x,y)}=0$. Therefore, $(1-g_{a,x})f_{1,(a,b,x,y)}=0$. By Lemma \ref{lemma: two projections are equal}, $g_{a,x}=f_{1,(a,b,x,y)}$. Similarly, rules (\ref{e_{b,y} below f_{2,(a,b,x,y)}}) and (\ref{1-e_{b,y} below 1-f_{2,(a,b,x,y)}}) force $g_{b,y}=f_{2,(a,b,x,y)}$. It follows that $g_{a,x}$ and $g_{b,y}$, along with $f_{3,(a,b,x,y)}$, are three projections summing to $1$ in $\cA(\widetilde{\cG})$. By Lemma \ref{lemma: PVMs with 3 outputs}, $g_{a,x}g_{b,y}=0$. By definition of $\cA(\cG)$, the generators $g_{a,x}$ satisfy $\sum_{a=1}^3 g_{a,x}=1$. Therefore, there is a unital $*$-homomorphism $\pi:\cA(\cG) \to \cA(\widetilde{\cG})$ such that $\pi(e_{a,x})=g_{a,x}$ for all $a,x$.

For the converse direction, in $\cA(\cG)$ we define, for each $(a,b,x,y) \in \lambda^{-1}(\{0\})$, $h_{1,(a,b,x,y)}=e_{a,x}$, $h_{2,(a,b,x,y)}=e_{b,y}$ and $h_{3,(a,b,x,y)}=1-(e_{a,x}+e_{b,y})$. Since $e_{a,x}e_{b,y}=0$, each $h_{i,(a,b,x,y)}$ is a projection in $\cA(\cG)$, for $i=1,2,3$ and $(a,b,x,y) \in \lambda^{-1}(\{0\})$. It is immediate that Rules (\ref{e_{a,x} below f_{1,(a,b,x,y)}})--(\ref{1-e_{b,y} below 1-f_{2,(a,b,x,y)}}) are satisfied by the projections $h_{i,(a,b,x,y)}$. Therefore, we obtain a unital $*$-homomorphism $\rho:\cA(\widetilde{\cG}) \to \cA(\cG)$ satisfying $\rho(g_{a,x})=e_{a,x}$, $\rho(f_{1,(a,b,x,y)})=h_{1,(a,b,x,y)}=e_{a,x}$ and $\rho(f_{2,(a,b,x,y)})=h_{2,(a,b,x,y)}=e_{b,y}$. It is not hard to verify that $\rho \circ \pi$ and $\pi \circ \rho$ are the identity on the generators of the respective algebras, which shows that $\cA(\cG)$ and $\cA(\widetilde{\cG})$ are $*$-isomorphic.
\end{proof}

\begin{remark}
In Theorem \ref{theorem: zero set and relation set}, if the game $\cG$ is symmetric, then one does not need the whole set $\lambda^{-1}(\{0\})$, as $\lambda(a,b,x,y)=\lambda(b,a,y,x)$ for all $a,b,x,y$.
\end{remark}

In the game obtained in the proof of Theorem \ref{theorem: zero set and relation set}, winning strategies are given by an array of $3$-output PVMs $(f_{a,x})_{a=1,2,3}$ where the rules are summarized in a zero set $Z \subset \{1,2,3\} \times \{1,2,...,n\}$ and a relation set $R \subset (\{1,2,3\} \times \{1,2,...,n\})^2$, satisfying $f_{a,x}=f_{b,y}$ if $(a,b,x,y) \in R$, and $f_{a,x}=0$ if $(a,x) \in Z$.

Given $n \in \bN$, and sets $Z \subset \{1,2,3\} \times \{1,...,n\}$ and $R \subset (\{1,2,3\} \times \{1,...,n\})^2$, we will define, for $t \in \{loc,q,qa,qc\}$, the set $C_t(Z,R)$ as the set of all elements $p=(p(a,b|x,y)) \in C_t^s(n,3)$ satisfying the following:
\begin{itemize}
\item
$p(a,b|x,y)=0$ if $(a,x) \in Z$ or $(b,y) \in Z$;
\item
$p(a,b|x,y)=p(a',b|x',y)$ if $(a,a',x,x') \in R$; and
\item
$p(a,b|x,y)=p(a,b'|x,y')$ if $(b,b',y,y') \in R$.
\end{itemize}

\begin{corollary}
Let $\cG=(I,O,\lambda)$ be a synchronous non-local game. Then there exists an $n \in \bN$ and sets $Z \subset \{1,2,3\} \times \{1,...,n\}$ and $R \subset (\{1,2,3\} \times \{1,...,n\})^2$ such that $C_t(\lambda)$ is affinely homeomorphic to $C_t(Z,R)$.
\end{corollary}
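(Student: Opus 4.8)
The plan is to obtain the corollary by composing the reduction of Theorem \ref{theorem: zero set and relation set} with the transfer principle of Theorem \ref{theorem: isomorphism gives affine homeomorphisms}. First I would apply Theorem \ref{theorem: zero set and relation set} to $\cG$, producing a synchronous game $\widetilde{\cG}=(\widetilde{I},\{1,2,3\},\widetilde{\lambda})$ together with a $*$-isomorphism $\Pi\colon\cA(\cG)\to\cA(\widetilde{\cG})$, where $\cA(\widetilde{\cG})$ is the universal unital $*$-algebra on three-output PVMs $\{f_{a,z}\}_{a=1}^3$, $z\in\widetilde{I}$, modulo the ideal generated by $f_{c,z}$ for $(c,z)$ in a set $Z$ and by $f_{c,z}-f_{d,w}$ for $(c,z,d,w)$ in a set $R$. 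Taking $n=|\widetilde{I}|$ and relabeling $\widetilde{I}$ as $\{1,\dots,n\}$ yields index sets $Z\subseteq\{1,2,3\}\times\{1,\dots,n\}$ and $R\subseteq(\{1,2,3\}\times\{1,\dots,n\})^2$ as required; the goal then becomes the chain of affine homeomorphisms $C_t(\lambda)\cong C_t(\widetilde{\lambda})\cong C_t(Z,R)$.

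For the first homeomorphism I would verify that $\Pi$ meets the hypotheses of Theorem \ref{theorem: isomorphism gives affine homeomorphisms}: both $\Pi$ and $\Pi^{-1}$ must send each canonical generator to a $\bC$-linear combination of canonical generators. Internally $\Pi$ is a composite of the reduction to three outputs (Theorem \ref{theorem: equivalence to 3 output game}) and the isomorphism constructed in the proof of Theorem \ref{theorem: zero set and relation set}, and the class of unital $*$-homomorphisms carrying generators to linear combinations of generators is closed under composition, so it suffices to check each factor. In the three-output reduction the forward map sends $e_{a,x}$ to the single generator $p_{a,x}$, while the inverse sends each $f_{i,(a,x)}$ to a single generator, to a partial sum $e_{1,x}+\cdots+e_{a,x}$, or to an expression $1-(e_{1,x}+\cdots+e_{a,x})$; the last of these is still a genuine linear combination of generators once one substitutes $1=\sum_{a\in O}e_{a,x}$ for the fixed question $x$. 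The construction in Theorem \ref{theorem: zero set and relation set} is analogous, the only non-generator image being $1-(e_{a,x}+e_{b,y})$, handled the same way. With both hypotheses verified, Theorem \ref{theorem: isomorphism gives affine homeomorphisms} produces an affine homeomorphism $C_t(\lambda)\cong C_t(\widetilde{\lambda})$ for each $t\in\{loc,q,qa,qc\}$.

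It remains to identify $C_t(\widetilde{\lambda})$ with $C_t(Z,R)$ inside $C_t^s(n,3)$. Given $p\in C_t(\widetilde{\lambda})$, write $p(a,b|x,y)=\tau(E_{a,x}E_{b,y})$ with $\{E_{a,x}\}_a$ a PVM for each $x$ in a type-appropriate tracial $C^*$-algebra carrying a faithful trace $\tau$; since $\tau\big((E_{a,x}E_{b,y})^*(E_{a,x}E_{b,y})\big)=\tau(E_{a,x}E_{b,y})=p(a,b|x,y)$, vanishing of $p$ on $\widetilde{\lambda}^{-1}(\{0\})$ forces $E_{a,x}E_{b,y}=0$ there, so $\{E_{a,x}\}$ induces a unital $*$-homomorphism out of $\cA(\widetilde{\cG})$; applying it to the relations of Theorem \ref{theorem: zero set and relation set} gives $E_{c,z}=0$ for $(c,z)\in Z$ and $E_{c,z}=E_{d,w}$ for $(c,z,d,w)\in R$, whence $p\in C_t(Z,R)$. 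Conversely, given $p\in C_t(Z,R)$, represent it again with a faithful trace: the $Z$-condition gives $\tau(E_{c,z})=p(c,c|z,z)=0$, hence $E_{c,z}=0$, and the $R$-condition gives $\tau(E_{c,z})=\tau(E_{d,w})$ together with $\tau(E_{c,z}E_{d,w})=\tau(E_{d,w})$, so $\tau\big((E_{c,z}-E_{d,w})^2\big)=0$ and $E_{c,z}=E_{d,w}$; thus $\{E_{a,x}\}$ represents $\cA(\widetilde{\cG})$ and $p\in C_t(\widetilde{\lambda})$. Hence $C_t(\widetilde{\lambda})=C_t(Z,R)$, and composing with the homeomorphism of the previous step completes the proof.

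I expect the main obstacle to be the bookkeeping of the middle step: confirming that every isomorphism in the chain defining $\Pi$, and in particular each inverse, really lands in $\bC$-linear combinations of generators rather than in more general words. The one delicate point is that several of these inverse maps most naturally output expressions containing the unit (such as $1-s_{a,x}$ or $1-(e_{a,x}+e_{b,y})$), which must first be rewritten via $1=\sum_a e_{a,x}$ before Theorem \ref{theorem: isomorphism gives affine homeomorphisms} applies; the remaining verifications are routine.
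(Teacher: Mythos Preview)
Your proposal is correct and follows essentially the same route as the paper: reduce via Theorem~\ref{theorem: zero set and relation set}, check that the composite isomorphism and its inverse send generators to $\bC$-linear combinations of generators, and invoke Theorem~\ref{theorem: isomorphism gives affine homeomorphisms}. Two small remarks. First, the paper also lists Proposition~\ref{proposition: symmetric isomorphism} among the isomorphisms to be checked, since Theorem~\ref{theorem: equivalence to 3 output game} assumes a symmetric game; you should mention this symmetrization step, though it is the identity on generators and so trivially satisfies the hypothesis. Second, where the paper simply writes down a rule function $\mu$ encoding $Z$ and $R$ and leaves the identification $C_t(\widetilde{\lambda})=C_t(Z,R)$ implicit, you supply a direct argument via faithful traces; this is a genuine improvement in exposition, and your computation $\tau\big((E_{c,z}-E_{d,w})^2\big)=0$ from the $R$-conditions is exactly the content the paper elides.
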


\begin{proof}
The game algebra $\cA(\cG)$ is $*$-isomorphic to a game algebra $\cA(\widetilde{\cG})$ with $n$ inputs of the form in Theorem \ref{theorem: zero set and relation set} for a zero set $Z$ and a relation set $R$. These relations are easily encoded in the rule function $\mu:\{1,2,3\} \times \{1,2,3\} \times \widetilde{I} \times \widetilde{I} \to \{0,1\}$ given by
\begin{itemize}
\item 
$\mu(a,b,x,y)=0$ if $(a,x) \in Z$ or $(b,y) \in Z$; and
\item
If $(a,b,x,y) \in R$, then $\mu(a',b,x,y)=\mu(a,b',x,y)=0$ for all $a' \neq a$ and $b' \neq b$.
\end{itemize}
As the isomorphisms in Proposition \ref{proposition: symmetric isomorphism} and Theorems \ref{theorem: equivalence to 3 output game} and \ref{theorem: zero set and relation set} send generators to sums of generators, the rest of the proof follows by an application of Theorem \ref{theorem: isomorphism gives affine homeomorphisms}.
\end{proof}

\section*{Acknowledgements}

The author was supported in part by an NSERC postdoctoral fellowship. We would like to thank Michael Brannan, Matt Kennedy, Vern Paulsen, Connor Paddock, and William Slofstra for valuable discussions.

\end{document}